\documentclass{article}

\usepackage{arxiv}

\usepackage[utf8]{inputenc} 
\usepackage[T1]{fontenc}    
\usepackage{hyperref}       
\usepackage{url}            
\usepackage{booktabs}       
\usepackage{amsfonts}       
\usepackage{nicefrac}       
\usepackage{microtype}      
\usepackage{lipsum}		
\usepackage{graphicx}
\usepackage{natbib}
\usepackage{doi}

\usepackage{color}
\usepackage{amsmath,amssymb,amsthm}
\newtheorem{lemma}{Lemma}
\newtheorem*{pf}{Proof}
\newtheorem{assum}{Assumption}
\newtheorem{theorem}{Theorem}

\newtheorem{remark}{Remark}
\newtheorem{problem}{Problem}
\newtheorem{subproblem}{Subproblem}
\newtheorem{definition}{Definition}

\title{Control Barrier Function only Formation Tracking in Multi-Agent Systems}

\author{ \href{https://orcid.org/0000-0003-0212-0762}{\includegraphics[scale=0.06]{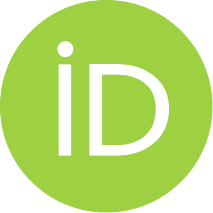}\hspace{1mm}S. Saharsh} \\
	Department of Cyber Physical Systems\\
	Indian Institute of Science\\
	Bengaluru, 560012, India \\
	\texttt{saharsh2021@iisc.ac.in} \\
	\And
	Pushpak Jagtap \\
	Department of Cyber Physical Systems\\
	Indian Institute of Science\\
	Bengaluru, 560012, India\\
	\texttt{pushpak@iisc.ac.in} \\
}

\renewcommand{\shorttitle}{\textit{arXiv} Template}

\begin{document}
\maketitle

\begin{abstract}
This paper presents a real-time control framework for formation tracking of heterogeneous multi-agent systems with non-linear dynamics. The proposed method formulates a \emph{single} Control Barrier Function-like constraint within a quadratic optimization setting that addresses formation tracking. Relying on the relative information of neighboring agents, the controller is designed to operate without the need for manual parameter tuning or a separate nominal formation controller. The leader-follower framework is validated through simulations of moving formations.
\end{abstract}

\keywords{Control barrier function (CBF), leader-follower formation tracking}

\section{Introduction}
A multi-agent system (MAS) comprises autonomous agents that coordinate to accomplish shared objectives, enhancing reliability and robustness compared to a single agent. Primary research focuses on consensus, formation control, circumnavigation, and collision avoidance for applications such as security, surveillance, and exploration.
The distributed formation control problem (\cite{liu2024survey}) in multi-agent systems has been tackled using position, distance, bearing, or displacement-based information from connected agents. Real-time communication constraints prevent agents from accessing global state information, allowing only relative state measurement with nearby neighbors. This motivates the use of relative state measurements available in the local vicinity of each agent, obtained using a camera, LiDAR (\cite{tron2016distributed}) or wireless sensors (\cite{mao2007wireless}).

Lyapunov-based formation control strategies (\cite{rayabagi2024formation}, \cite{pampatwar2021planar}), leveraging bearing Laplacian matrix (\cite{zhao2015bearing}), address the problem of sensing limitations by relying solely on bearing vectors, partial information that reduces sensor dependency and is suitable for Global Positioning System (GPS)-denied environments, but suffers from either poor scalability because of its reliance on specific neighbor relationships or lacks adaptability in highly dynamic or uncertain environments. In contrast, this paper investigates the moving formation tracking problem in a multi-agent system using relative state information of neighboring agents.
Recent advances (\cite{huang2024prescribed}, \cite{fu2023safe}) in the formation control of MAS have shown that a Quadratic Programming (QP)-based distributed control strategy for formation tracking enables real-time constraint handling, scales efficiently with the number of agents, integrates nominal feedback controllers for formation, and ensures optimal control effort. The nominal formation controllers in the QP problem typically conflict with real-time constraints, causing infeasibility (also highlighted in \cite{wu2023quadratic}). 

This work makes three main contributions that address the key limitations of existing approaches. $(i)$ We propose a CBF-like function that extends standard CBFs by ensuring both forward invariance of a safe set and finite-time reachability of a target set within the safe set, whereas conventional CBFs guaranty only forward invariance of the safe set. $(ii)$  We introduce a novel CBF-QP-like optimization framework with \emph{single} CBF constraint for formation tracking with a heterogeneous nonlinear multi-agent system, $(iii)$ We achieve formation tracking with `non-linear agent dynamics' for multi-agent systems in the leader-follower framework, using local bearings, relative positions, and velocities of neighboring agents.

\section{Problem Formulation}
\label{sec: prelim and prob defn}

\textit{Notations}. Scalars are denoted as $x \in \mathbb{R}$, and vectors as $\boldsymbol{x}$ (bold letter).  Let $\mathcal{S}$ be the unit circle.  Given a set $\mathcal{C}$, $\partial \mathcal{C}$ denotes its boundary and $int (\mathcal{C})$ is the interior. The cardinality of the set $A$ is denoted by $|A|$. 
The closed ball $\mathbb{B}(\boldsymbol{c},\lambda) \subseteq \mathbb{R}^2$ is defined as $\mathbb{B}(\boldsymbol{c}, \lambda) := \{ \boldsymbol{x} \in \mathbb{R}^2 : \|\boldsymbol{x} - \boldsymbol{c}\| \leq \lambda\}$, where $\boldsymbol{c}$ is the center and $\lambda>0$ is the radius. All other notation follows standard mathematical conventions.

\subsection{System Description}
We consider a heterogeneous multi-agent system consisting of $n$ agents, where each agent is indexed by $i \in \{1, 2, \dots, n\}$. Each $i^{th}$ agent (modeled as a planar robot) is represented as:
\vspace{-0.5em}
\begin{align}
        \dot{\boldsymbol{p}}_i &= \boldsymbol{v}_i,  \nonumber \\
        \dot{\boldsymbol{v}}_i &= f_{v_i}(\boldsymbol{p}_i,\boldsymbol{v}_i) + h_{v_i}(\boldsymbol{p}_i,\boldsymbol{v}_i)\boldsymbol{u}_i, \label{sys}
\end{align}
    where $\boldsymbol{p}_i,\boldsymbol{v}_i\in \mathbb{R}^2$ and $\boldsymbol{u}_i\in \mathbb{R}^2$ are the position in the 2D plane, velocity, and input vectors, respectively. The maps $f_{v_i} : \mathbb{R}^2 \times \mathbb{R}^2 \rightarrow \mathbb{R}^2$, $h_{v_i}: \mathbb{R}^2 \times \mathbb{R}^2 \rightarrow \mathbb{R}^{2 \times 2}$ are locally Lipschitz continuous functions.

\begin{remark}
    A broad class of planar robotic systems can be modeled as \eqref{sys}, such as simple double integrator, Ackermann-drive vehicles, differential-drive robots, aerial vehicles and fixed-wing aircraft operating in altitude hold mode. {For detailed derivation, especially for a system with orientation state, refer to \cite{tayal2026collision} and \cite{sawarkar2026sliding}[II B].}
\end{remark}

\subsection{Graph-based Communication Framework}
Given a multi-agent system with $n$ agents, the communication topology is represented by an undirected graph $\mathcal{G} = (\mathcal{V},\mathcal{E})$, where the set of nodes of the graph $\mathcal{V} = \{1,2,...,n\}$ denotes the set of $n$ agents (modeled as \eqref{sys}) and $\mathcal{E} \subseteq \mathcal{V} \times \mathcal{V}$ denotes the set of edges, with each edge given by a pair $(i,j) \in \mathcal{E}$, which indicates the flow of information from the $j^{th}$ agent to the $i^{th}$ agent. The neighbors of the $i^{th}$ agent are denoted as $\mathcal{N}_{i}$, where $\mathcal{N}_{i} := \{j \in \mathcal{V}: (i,j) \in \mathcal{E}, j \neq i \}$.\\ 
For a given edge $(i,j) \in \mathcal{E}$, we denote the relative position between agent $i$ and $j$ by $\boldsymbol{p}_{ij} := \boldsymbol{p}_{j} - \boldsymbol{p}_{i}$ and the bearing by $\boldsymbol{g}_{ij} := \dfrac{\boldsymbol{p}_{ij}}{\|\boldsymbol{p}_{ij}\|}, \forall (i,j) \in \mathcal{E}$.
The bearing configuration $\boldsymbol{g} \in \mathbb{R}^{2|\mathcal{E}|}$ is formed by stacking the bearing vectors $\boldsymbol{g}_{ij}$ for all edges $(i,j) \in \mathcal{E}$.
Given $\boldsymbol{g}_{ij} \in \mathbb{R}^2$, orthogonal projection matrix $P_{\boldsymbol{g}_{ij}} \in \mathbb{R}^{2 \times 2}$ is defined as $P_{\boldsymbol{g}_{ij}} = I_2 \ - \ \boldsymbol{g}_{ij} \boldsymbol{g}_{ij}^{\top}$.
Using the matrix $P_{\boldsymbol{g}_{ij}}$ and the undirected graph $\mathcal{G} = (\mathcal{V},\mathcal{E})$, the bearing Laplacian matrix $\mathcal{B} \in \mathbb{R}^{2n \times 2n}, n = |\mathcal{V}|$ (as in \cite{zhao2019bearing}), is defined as:
\begin{align}\label{bear}
    [\mathcal{B}]_{ij} =
\begin{cases}
{0}_{2}, & i \ne j,\ (i,j) \notin \mathcal{E}, \\[8pt]
- P_{\boldsymbol{g}_{ij}}, & i \ne j,\ (i,j) \in \mathcal{E}, \\[8pt]
\sum\limits_{k \in \mathcal{N}_i} P_{\boldsymbol{g}_{ik}}, & i = j,\ i \in \mathcal{V}.
\end{cases}\end{align}
The matrix $\mathcal{B}$ is used later in Subsection \ref{subsec:problem formulation} to define the bearing rigidity. 


In this work, we consider a leader-follower framework with the objective of formation tracking. In this setup, agents are divided into two categories: leaders, denoted by $ \mathcal{V}_{L} \subset \mathcal{V} $ with $ n_l = |\mathcal{V}_{L}| $, and followers, denoted by $\mathcal{V}_{F} = \mathcal{V} \setminus \mathcal{V}_{L} $ with $ n_f = |\mathcal{V}_{F}| = n - n_l $. The leader agents follow a predetermined path (such as a straight path), which serves as the anchor for the moving formation. Follower agents are required to maintain a prescribed formation, guided by the desired bearing vectors. Bearing Laplacian matrix $\mathcal{B}$ (as defined in \eqref{bear}) for the leader-follower framework can be partitioned into block matrices as 
    $\mathcal{B} = \begin{bmatrix}
\mathcal{B}_{ll} & \mathcal{B}_{lf} \\
\mathcal{B}_{fl} & \mathcal{B}_{ff}
\end{bmatrix}$,
where $\mathcal{B}_{ll} \in \mathbb{R}^{2n_l \times 2n_l}$, $\mathcal{B}_{lf} \in \mathbb{R}^{2n_l \times 2n_f}$,
$\mathcal{B}_{fl} \in \mathbb{R}^{2n_f \times 2n_l}$, $ \mathcal{B}_{ff} \in \mathbb{R}^{2n_f \times 2n_f}$, where $l$ and $f$ represent leaders and followers, respectively.
The formation objective with inter-neighbor bearings is formally defined in the following subsection.

\subsection{Problem Formulation}
\label{subsec:problem formulation}
The multi-agent system of $n$ agents, connected over a communication graph $\mathcal{G}$ with the leader-follower framework, is required to reach, maintain, and track the desired formation, given by the desired leader positions $\boldsymbol{p}^{d}_{l}, \forall l \in \mathcal{V}_{L}$ and the desired bearing vectors $\boldsymbol{g}_{ij}^{d}$. The unique formation of the multi-agent system is characterized by a desired position configuration vector $\boldsymbol{p}^{d} \in \mathbb{R}^{2n}$, defined as
\begin{align}\label{dpos}
   \boldsymbol{p}^{d} := [(\boldsymbol{p}_{1}^{d})^{\top}, \cdots,(\boldsymbol{p}_{n}^{d})^{\top}]^{\top}, 
\end{align}
where $\boldsymbol{p}_{i}^{d}$, $i\in\mathcal{V}$ denote the desired formation position of agent $i$. 
The formation objective is defined below.
\begin{definition} [Unique formation]
    Given a desired position configuration $\boldsymbol{p}^{d}$, with agents (modeled as \eqref{sys}) in the multi-agent system, connected as $\mathcal{G}$ under a leader-follower framework, it is said to achieve a \textit{unique formation} if the following holds: $\lim_{t \rightarrow \infty} \| \boldsymbol{p}_i(t) - \boldsymbol{p}_i^{d}\| = 0, \forall i \in \mathcal{V}.$
    \label{uform}
\end{definition}
With the given leader positions $\boldsymbol{p}_l^d, l \in \mathcal{V}_L$, the desired formation in the bearing configuration $\boldsymbol{g}^d$, denoted as $\mathcal{F}(\boldsymbol{g})$, described by the desired bearing vectors $\boldsymbol{g}_{ij}^d := \frac{\boldsymbol{p}_{ij}^{d}}{\|\boldsymbol{p}_{ij}^{d}\|}, \forall (i,j) \in \mathcal{E}$, $\boldsymbol{p}_{ij}^{d}= \boldsymbol{p}_{j}^{d} - \boldsymbol{p}_{i}^{d}$, is achieved if
    \begin{align}\label{form}
        &\lim_{t \rightarrow \infty} \langle\boldsymbol{g}_{ij}(t),\boldsymbol{g}_{ij}^{d}\rangle = 1, \forall (i,j) \in \mathcal{E}. 
    \end{align}
We assume bearing rigidity to ensure that a bearing-defined formation corresponds to a unique formation as in Definition \ref{uform}, which is as follows.
\begin{assum}[\cite{zhao2016localizability}]
     The graph $\mathcal{G}$ is infinitesimally bearing rigid, i.e., $det(\mathcal{B}_{ff}) \neq 0$. 
    \label{A1}
\end{assum}
The transition from a formation $\mathcal{F}(\boldsymbol{g})$ defined in bearing to the unique formation is established next.
\begin{lemma}[\cite{zhao2016localizability}]
    Given a multi-agent system in the leader-follower framework, connected as $\mathcal{G}$ with leaders positions as $\boldsymbol{p}_{l}^{d}$ and in distinct positions, i.e., $\boldsymbol{p}_{l_1} \neq \boldsymbol{p}_{l_2}, \forall l_1,l_2 \in \mathcal{V}_L$, the desired formation is uniquely determined with $\mathcal{F}(\boldsymbol{g})$, i.e., $\forall (i,j) \in \mathcal{E}, \lim_{t \rightarrow \infty} \boldsymbol{g}_{ij}(t) = \boldsymbol{g}_{ij}^{d} \Rightarrow \forall i \in \mathcal{V}_{F}, \lim_{t \rightarrow \infty} \boldsymbol{p}_{i}(t) = \boldsymbol{p}_{i}^{d}$, if and only if $\mathcal{B}_{ff}$ is non-singular. In addition, if $\mathcal{B}_{ff}$ is non-singular, then the system has at least two leaders (i.e., $n_l \geq 2$). 
    \label{lem1}
\end{lemma}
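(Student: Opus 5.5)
The plan is to reduce the statement to a linear-algebraic uniqueness result about the bearing Laplacian. First I would characterize when a configuration $\boldsymbol{p}\in\mathbb{R}^{2n}$ realizes the desired bearings. Since $P_{\boldsymbol{g}^d_{ij}}$ is the orthogonal projector onto the complement of $\boldsymbol{g}^d_{ij}$, the relative position $\boldsymbol{p}_{ij}$ is parallel to $\boldsymbol{g}^d_{ij}$ exactly when $P_{\boldsymbol{g}^d_{ij}}\boldsymbol{p}_{ij}=0$. Using $P^\top=P=P^2$, we have
\[
\boldsymbol{p}^\top \mathcal{B}\,\boldsymbol{p}=\sum_{(i,j)\in\mathcal{E}}\|P_{\boldsymbol{g}^d_{ij}}\boldsymbol{p}_{ij}\|^2 ,
\]
with $\mathcal{B}$ built from $\boldsymbol{g}^d$. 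Because $\mathcal{B}$ is positive semidefinite, every configuration with bearings $\boldsymbol{g}^d$ satisfies $\mathcal{B}\boldsymbol{p}=0$. The desired configuration $\boldsymbol{p}^d$ is one such configuration.

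Next I would partition $\boldsymbol{p}=[\boldsymbol{p}_l^\top,\boldsymbol{p}_f^\top]^\top$ and fix the leaders at $\boldsymbol{p}_l^d$. The follower block row of $\mathcal{B}\boldsymbol{p}=0$ gives
\[
\mathcal{B}_{ff}\boldsymbol{p}_f=-\mathcal{B}_{fl}\boldsymbol{p}^d_l .
\]

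\textbf{Sufficiency.} If $\mathcal{B}_{ff}$ is nonsingular, this linear system has the single solution $\boldsymbol{p}_f=-\mathcal{B}_{ff}^{-1}\mathcal{B}_{fl}\boldsymbol{p}_l^d$, which must equal $\boldsymbol{p}_f^d$. For the limit statement, suppose $\boldsymbol{g}_{ij}(t)\to\boldsymbol{g}^d_{ij}$. Write $\mathcal{B}(\boldsymbol{g}^d)\boldsymbol{p}(t)$ with $\boldsymbol{p}_l(t)=\boldsymbol{p}_l^d$. Its $(i,j)$ contributions are $P_{\boldsymbol{g}^d_{ij}}\boldsymbol{p}_{ij}(t)=\|\boldsymbol{p}_{ij}(t)\|\,P_{\boldsymbol{g}^d_{ij}}\boldsymbol{g}_{ij}(t)$, and these tend to zero provided $\|\boldsymbol{p}_{ij}(t)\|$ stays bounded. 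Then $\boldsymbol{p}_f(t)-\boldsymbol{p}_f^d=\mathcal{B}_{ff}^{-1}\bigl(\text{vanishing residual}\bigr)\to0$.

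\textbf{Main obstacle.} The hard step is precisely this boundedness of the relative positions. The reason is that bearing convergence alone says nothing about scale. I would close the gap either by assuming bounded trajectories, which the controller will later guarantee, or by a contradiction argument: normalize $\boldsymbol{p}_f(t)$ by its norm and take a limit point. That limit would be a nonzero kernel element of $\mathcal{B}_{ff}$, since the leader term vanishes after normalization, contradicting nonsingularity. I would also allow the leaders to move, using the same argument applied with $\boldsymbol{p}_l^d(t)$.

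\textbf{Necessity.} If $\mathcal{B}_{ff}$ is singular, take $\boldsymbol{\xi}\neq0$ with $\mathcal{B}_{ff}\boldsymbol{\xi}=0$. Then $\boldsymbol{\xi}^\top\mathcal{B}_{ff}\boldsymbol{\xi}=\sum\|P\,\boldsymbol{\xi}_{ij}\|^2=0$ over edges, where leader components are set to zero. So $\boldsymbol{p}^d+\epsilon[0^\top,\boldsymbol{\xi}^\top]^\top$ preserves every bearing direction. For small $\epsilon$ it also preserves the orientations, because it does not flip sign or collapse any edge. This gives a distinct follower configuration with the same bearings and the same leaders, so uniqueness fails.

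\textbf{At least two leaders.} The kernel of $\mathcal{B}$ always contains the translations $\boldsymbol{1}\otimes\boldsymbol{a}$ and the scaling $\boldsymbol{p}^d$. If $n_l=1$, with leader at position $\boldsymbol{c}$, the vector $\boldsymbol{p}^d-\boldsymbol{1}\otimes\boldsymbol{c}$ lies in the kernel of $\mathcal{B}$, has zero leader component, and has nonzero follower part (assuming distinct positions). Hence its follower part is a nonzero kernel vector of $\mathcal{B}_{ff}$, and $\mathcal{B}_{ff}$ is singular. If $n_l=0$, translations alone lie in the kernel. Therefore nonsingularity forces $n_l\ge2$.
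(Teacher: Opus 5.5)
The paper gives no proof of this lemma; it is imported from \cite{zhao2016localizability}. Your argument is correct and is essentially the standard proof from that reference:
\begin{itemize}
\item configurations realizing $\boldsymbol{g}^d$ lie in $\ker\mathcal{B}$;
\item with the leaders pinned, uniqueness is exactly unique solvability of $\mathcal{B}_{ff}\boldsymbol{p}_f=-\mathcal{B}_{fl}\boldsymbol{p}_l^d$;
\item a kernel vector of $\mathcal{B}_{ff}$ gives a small bearing-preserving perturbation, and your check that $\|\boldsymbol{p}^d_{ij}\|+\epsilon\langle\boldsymbol{\xi}_{ij},\boldsymbol{g}^d_{ij}\rangle>0$ for small $\epsilon$ is what rules out sign flips;
\item translations together with the scaling direction $\boldsymbol{p}^d$ in $\ker\mathcal{B}$ force $n_l\geq 2$.
\end{itemize}

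What you add is a treatment of the limit formulation the paper actually states, which the static localizability result does not literally cover. That part is sound, with two small remarks.

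First, in the normalization argument the whole residual must vanish after dividing by $\|\boldsymbol{p}_f(t)\|$, not just the leader term. It does, because each $\|\boldsymbol{p}_{ij}(t)\|$ grows at most linearly in $\|\boldsymbol{p}_f(t)\|$ while $P_{\boldsymbol{g}^d_{ij}}\boldsymbol{g}_{ij}(t)\to\boldsymbol{0}$.

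Second, for moving leaders $\boldsymbol{p}_l^d(t)=\boldsymbol{p}_l^d(0)+\boldsymbol{v}_l t$ you cannot apply the normalization verbatim to $\boldsymbol{p}_f(t)$, since the leader term then no longer vanishes. Either subtract the common translation first, which is harmless because $\mathcal{B}(\boldsymbol{1}\otimes\boldsymbol{a})=\boldsymbol{0}$ and bearings are translation invariant. Or argue directly on $\boldsymbol{e}_f(t)=\boldsymbol{p}_f(t)-\boldsymbol{p}_f^d(t)$. Since $\mathcal{B}_{ff}\boldsymbol{e}_f=\boldsymbol{r}$, you have
\[
\|\boldsymbol{r}(t)\|\leq c\,\varepsilon(t)\,\bigl(C+\|\boldsymbol{e}_f(t)\|\bigr),\qquad \varepsilon(t)=\max_{(i,j)\in\mathcal{E}}\|P_{\boldsymbol{g}^d_{ij}}\boldsymbol{g}_{ij}(t)\|\to 0 .
\]
Combined with $\|\boldsymbol{e}_f\|\leq\|\mathcal{B}_{ff}^{-1}\|\,\|\boldsymbol{r}\|$, this gives $\boldsymbol{e}_f(t)\to\boldsymbol{0}$ with no separate boundedness step, and it covers translating leaders at once.
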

\begin{definition} [Rigid formation]
   Consider a multi-agent system connected as $\mathcal{G}$ under a leader-follower framework, with time-varying leader trajectories $\boldsymbol{p}_l^{d}(t)$ moving with constant velocity $\boldsymbol{v}_{l}$ and the desired bearing configuration $\boldsymbol{g}^{d}$. With Assumption \ref{A1}, the agents are said to achieve a unique \textit{rigid formation} $\mathcal{F}(\boldsymbol{g}(t))$ if $
    \lim_{t \rightarrow \infty} \| \boldsymbol{g}(t)- \boldsymbol{g}^{d}\| = 0$.  
\label{mform}
\end{definition}
\begin{remark}
    All leader agents $i \in \mathcal{V}_L$ have constant velocity $\boldsymbol{v}_{l}$ to maintain the unique formation shape under $\mathcal{F}(\boldsymbol{g})$. 
\end{remark}
Additionally, the desired position vectors $\boldsymbol{p}_{i}^d$ also move with velocity $\dot{\boldsymbol{p}}_{i}^d = \boldsymbol{v}_{l}$. The problem addressed in this work is defined as follows.
\begin{problem}\label{mainprob}
    Given a heterogeneous multi-agent system with agent dynamics as in \eqref{sys}, a communication graph $\mathcal{G}$ with leader-follower framework under Assumption \ref{A1}, a desired unique rigid formation $\mathcal{F}(\boldsymbol{g}(t))$ specified by the leader's trajectory $\boldsymbol{p}_{l}^{d}(t)$ and desired bearing vectors $\boldsymbol{g}_{ij}^{d}$, design a distributed control strategy using bearing information $\boldsymbol{g}_{ij}$, so that the controlled system achieves a unique rigid formation $\mathcal{F}(\boldsymbol{g}(t))$ as in Definition \ref{mform}.
\end{problem}
Consequently, we propose a novel time-varying formation tracking control law to solve Problem \ref{mainprob}, based on a single CBF like constraint in the QP optimization framework. 
{Unlike standard CLF-CBF formulations where formation and safety pose as competing objectives, the proposed approach encodes formation itself as a CBF constraint, enabling a direction for unified constraint-based safe control design (using \cite{wang2017safety}). CBF-like constraint strategy is explained next.

\subsection{Reachability Control Barrier Function}
We introduce a CBF like function that ensures forward invariance of a safe set $X \subset \mathbb{R}^{n}$ as well as reachability of a target set $R \subset X$, in finite time $\tau \in \mathbb{R}^{+}$. We employ a continuously differentiable function $b: \mathbb{R}^{n} \rightarrow \mathbb{R}$ in order to define the safe set $X$, where
\begin{align} \label{safeset}
    X:= \{{\boldsymbol{x}} \in \mathbb{R}^{n}: b(\boldsymbol{x}) \geq 0\},
\end{align}
and the boundary set $\partial X := \{{\boldsymbol{x}} \in \mathbb{R}^{n}: b(\boldsymbol{x}) = 0 \}$. Using the function $b$, which has unique maxima in $X$, $\max_{\forall \boldsymbol{x} \in X} b(\boldsymbol{x}) = M$, with constant $\mu \leq M, \mu\in \mathbb{R}^{+}_{0}$, the target set $R$ is defined as 
\begin{align} \label{targetset}
    R:= \{{\boldsymbol{x}} \in \mathbb{R}^{n}: b(\boldsymbol{x}) \geq \mu\}.
\end{align}
Now we define our CBF like function, using $b$ as below.
\begin{definition}[Reachability CBF]\label{cbfdef}
    Given a compact set $X \subset \mathbb{R}^n$ (as in \eqref{safeset}), target set $R$ (as in \eqref{targetset}), using a continuously differentiable function $b$, with unique maxima in $X$, associated with the affine control system $\dot{\boldsymbol{x}} = f({\boldsymbol{x}}) + h({\boldsymbol{x}})\boldsymbol{u}$, where $f: \mathbb{R}^{n} \rightarrow \mathbb{R}^{n}$, $h: \mathbb{R}^{n} \rightarrow \mathbb{R}^{n \times m}$,  $\boldsymbol{u} \in \mathbb{R}^{m}$, then $b$ is a \emph{reachability control barrier function} (reach CBF) if for reachability time $ \tau > 0$, it holds that
    \begin{align}
        \label{rcbfcons}
        &\sup_{{\boldsymbol{u}}}\{\underbrace{\nabla b(\boldsymbol{x}) \cdot f(\boldsymbol{x})}_{K_f(\boldsymbol{x})} + \underbrace{\nabla b(\boldsymbol{x}) \cdot h(\boldsymbol{x})}_{K_h^{\top}(\boldsymbol{x})}{\boldsymbol{u}}\} \geq \frac{\mu}{\tau},\forall \boldsymbol{x} \in X,
    \end{align}
where $K_f : \mathbb{R}^n \rightarrow \mathbb{R}$ and $K_h : \mathbb{R}^n \rightarrow \mathbb{R}^{m}$ are locally Lipschitz continuous functions. 
\end{definition}
Furthermore, the CBF-QP like optimization problem is designed for the controller framework, using reach CBF $b$ and the linear constraint \eqref{rcbfcons} in ${\boldsymbol{u}}$ as below.
\begin{theorem}\label{thm1}
Given a compact set $X$ as defined in \eqref{safeset}, a target set $R$ (as in \eqref{targetset}), using the reach CBF $b$, as in Definition \ref{cbfdef}, if $\frac{\partial b}{\partial {\boldsymbol{x}}} \neq 0, \forall {\boldsymbol{x}} \in \partial X$, for initial state $\boldsymbol{x}(0) \in X$, and the control input ${\boldsymbol{u}}$ is computed based on the optimization problem as
\begin{align}\label{QPdef}
    &{\boldsymbol{u}}({\boldsymbol{x}}) = \arg \min_{\boldsymbol{q} \in \mathbb{R}^{m}} \quad \frac{1}{2}\boldsymbol{q}^{\top}\boldsymbol{q} , \nonumber \\
    &s.t. \quad K_{f}(\boldsymbol{x}) + K_{h}^{\top}(\boldsymbol{x})\boldsymbol{q} \geq \frac{\mu}{\tau}, 
\end{align}
where $\tau \in \mathbb{R}^{+}, 0 \leq \mu \leq M$, then the set $R$ is said to be reachable in finite time $\tau$, inside the forward invariant set $X$ with minimal effort.    
\end{theorem}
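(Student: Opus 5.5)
The argument has three parts: well-posedness of the controller, forward invariance of $X$, and finite-time reachability of $R$.

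\textbf{Well-posedness.} The constraint in \eqref{QPdef} is a single affine inequality in $\boldsymbol{q}$, and the cost is strictly convex. By Definition~\ref{cbfdef}, condition \eqref{rcbfcons} makes the constraint feasible for every $\boldsymbol{x}\in X$. Hence \eqref{QPdef} has a unique minimizer, which the KKT conditions give in closed form:
\begin{align*}
\boldsymbol{u}(\boldsymbol{x}) = \max\Big\{0,\ \tfrac{\mu}{\tau}-K_f(\boldsymbol{x})\Big\}\,\frac{K_h(\boldsymbol{x})}{\|K_h(\boldsymbol{x})\|^2}.
\end{align*}
When $K_h(\boldsymbol{x})=0$, feasibility forces $K_f(\boldsymbol{x})\ge \mu/\tau$ and we take $\boldsymbol{u}=0$.

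This closed form shows the resulting input has minimal Euclidean norm among all inputs satisfying the reach-CBF inequality, which is the ``minimal effort'' claim. It also gives local Lipschitz continuity of $\boldsymbol{u}$ wherever $K_h\neq 0$, since $K_f$ and $K_h$ are locally Lipschitz and $\max\{0,\cdot\}$ is Lipschitz. With $f$ and $h$ locally Lipschitz, the closed loop $\dot{\boldsymbol{x}}=f+h\boldsymbol{u}$ then has a unique local solution. Compactness of $X$, together with the invariance shown next, extends this solution to all $t\ge0$.

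\textbf{Forward invariance.} Along closed-loop trajectories the constraint gives
\[
\dot b(\boldsymbol{x}(t)) = K_f + K_h^\top \boldsymbol{u} \ \ge\ \frac{\mu}{\tau}\ \ge 0 .
\]
So $b$ is nondecreasing along solutions starting in $X$. This yields $b(\boldsymbol{x}(t))\ge b(\boldsymbol{x}(0))\ge 0$, i.e.\ forward invariance of $X$. This is the standard CBF/Nagumo argument with the class-$\mathcal K$ term replaced by the constant $\mu/\tau$. The hypothesis $\partial b/\partial\boldsymbol{x}\neq 0$ on $\partial X$ ensures that $\partial X$ is a regular level set, so Nagumo's condition applies there.

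\textbf{Reachability.} Integrating the same inequality gives
\[
b(\boldsymbol{x}(t))\ge b(\boldsymbol{x}(0))+\frac{\mu t}{\tau}\ge \frac{\mu t}{\tau}.
\]
Hence $b(\boldsymbol{x}(\tau))\ge\mu$, i.e.\ $\boldsymbol{x}(\tau)\in R$ at the latest by time $\tau$. Since $b$ keeps increasing, $\boldsymbol{x}(t)\in R$ for all $t\ge\tau$. Because $R\subseteq X$, the trajectory stays in the invariant set throughout.

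\textbf{Main obstacle.} The delicate point is consistency. A strictly increasing $b$ with rate at least $\mu/\tau>0$ cannot persist forever on the compact set $X$, because $b\le M$ there. So \eqref{rcbfcons} cannot literally hold at the maximizer of $b$ when $\mu>0$ unless the solution only needs to exist up to time $\tau$. Equivalently, the argument must be read on a finite horizon $[0,\tau]$, or with the constraint imposed only on $X\setminus R$ and switched to the ordinary invariance condition $\dot b\ge0$ inside $R$.

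I would handle this by proving the two claims separately. The reachability claim uses the integral bound above, restricted to the time interval before $b$ reaches $\mu$. The invariance of $R$ and $X$ afterwards follows from $\dot b\ge 0$. A secondary technicality is Lipschitz continuity of $\boldsymbol{u}$ near points where $K_h=0$. There I would assume, or argue, that feasibility keeps the closed loop continuous, so that solutions exist in the Carathéodory sense.
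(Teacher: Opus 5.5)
Your proposal is correct and follows essentially the paper's proof: integrating $\dot b = K_f + K_h^{\top}\boldsymbol{u}\ge \mu/\tau$ over $[0,\tau]$ with $b(\boldsymbol{x}(0))\ge 0$ gives $b(\boldsymbol{x}(\tau))\ge \mu$, and $\dot b\ge 0$ gives forward invariance of $X$ (the paper just invokes a standard CBF invariance theorem here); your closed-form QP solution and well-posedness discussion add details the paper omits. Your ``main obstacle'' is a genuine point the paper misses: for $\mu>0$ the maximizer of $b$ is a global maximizer with $\nabla b=0$, so \eqref{rcbfcons} fails there, and the correct repair is your switching reading (enforce the constraint only until $b$ reaches $\mu$, then require $\dot b\ge 0$) rather than the finite-horizon reading you call equivalent, which is still infeasible whenever $b(\boldsymbol{x}(0))>M-\mu$.
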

\begin{pf}
    By using the initial condition $b(\boldsymbol{x}(0)) \geq 0$, and after integrating the constraint in \eqref{QPdef}, we get
\begin{align}
    b(\boldsymbol{x}(\tau)) - b(\boldsymbol{x}(0)) &\geq \int_{0}^{\tau}\frac{\mu}{\tau}ds  = \mu> 0.
    \label{step1r}
\end{align}
Thus, the set $R$ is reached in time $\tau$ as shown in \eqref{step1r}. The compact set $X$ is forward invariant using (\cite{ames2019control}[Theorem 1]), as $b(\boldsymbol{x}(0)) \geq 0, \dot{b} \geq 0, \forall t > 0$. This completes the proof. 
\end{pf}
{In contrast to finite-time or prescribed-time CBFs (\cite{abel2023prescribed}), which primarily enforce safety over a specified time horizon, and CLF-based methods, which focus on asymptotic convergence to a target set, the proposed reach-CBF formulation simultaneously guarantees forward invariance of the safe set while ensuring convergence to the target set within a prescribed time.}
Moving forward, our main Problem \ref{mainprob} is decomposed into subproblems for each follower agent $i$ for the distributed control strategy as discussed in the following section.

\section{Subproblem Formulation for Distributed Execution}
\label{sec: distributed execution}
To solve the Problem~\ref{mainprob} in a distributed manner with the QP optimization framework, we decompose the global formation task $\mathcal{F}(\boldsymbol{g}(t))$, defined over all edges $(i,j) \in \mathcal{E}$, into local subtasks $\mathcal{F}_i(\boldsymbol{g}(t))$, defined over each agent's neighborhood $\mathcal{N}_i$. With the given rigidity using Lemma \ref{lem1}, the problem decomposition enables each agent to pursue a local formation constraint while collectively ensuring convergence to the desired unique formation in a distributed optimization setting. To address Problem~\ref{mainprob} based on local environment information, we assume:
\begin{assum}
    The relative position of $j^{th}$ agent with respect to $i^{th}$ agent can be measured under two cases:
    \begin{itemize}
        \item If they are connected, i.e., if $j \in \mathcal{N}_i$,
        \item If they lie within the local vicinity region $\mathbb{B}(\boldsymbol{p}_i, \lambda_i)$, we define the set $E_i$ of all agents in this vicinity as 
        \begin{align}\label{locvic}
            E_i := \{j \in \mathcal{V}: \boldsymbol{p}_{j} \in \mathbb{B}(\boldsymbol{p}_i, \lambda_i) \},
        \end{align}where $\lambda_i \in \mathbb{R}^+$ is the sensing radius of the $i^{th}$ agent.
    \end{itemize}
    \label{A2}
\end{assum}
\begin{remark}
Assumption \ref{A2} is non-conservative, as it requires only that agent $i$, when connected to or in close proximity with its neighbors, can measure their relative positions. This assumption does not depend on the absolute positions of neighbors; instead, it relies solely on relative measurements that can be obtained using onboard sensors such as cameras or LiDAR. The proposed approach is effective even in GPS-denied environments.
\end{remark}
\begin{figure}[thpb]
      \begin{center}

      \includegraphics[scale=0.2]{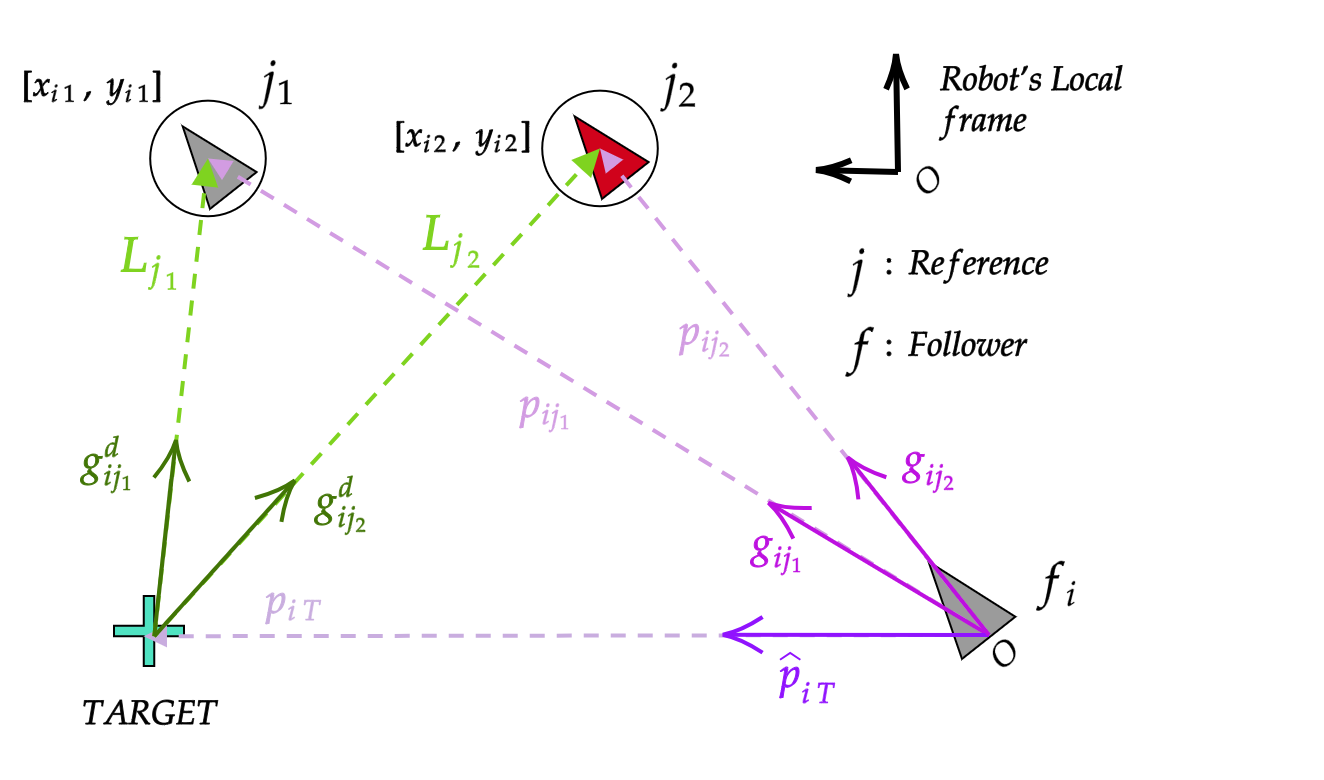}

      \caption{Formation control problem for follower agent $i$.}
      \label{probset}
      \end{center}
\end{figure}
Local references for constructing the formation subproblem (as shown in Fig. \ref{probset}) are introduced next.
\begin{lemma}\label{lem2}
    Given the multi-agent system connected as $\mathcal{G}$ in a leader-follower framework and under Assumption~\ref{A1}, each follower agent $i$ has at least two neighbors in unique positions with non-parallel desired bearings, termed as reference nodes $j_1,j_2 \in \mathcal{N}_i$, such that the removal of any corresponding edge $(i,j_1)\ \text{or}\ (i,j_2) \in \mathcal{E}$, implies that the matrix $\mathcal{B}_{ff}$ is singular (i.e., $det(\mathcal{B}_{ff}) = 0$). 
\end{lemma}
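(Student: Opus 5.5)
The plan is to work with the quadratic form of $\mathcal{B}_{ff}$. Fix the leader coordinates to zero. For a follower perturbation $\boldsymbol{x} = [\boldsymbol{x}_i]_{i \in \mathcal{V}_F} \in \mathbb{R}^{2n_f}$, the definition \eqref{bear} gives
\begin{align*}
\boldsymbol{x}^{\top}\mathcal{B}_{ff}\boldsymbol{x} = \sum_{(i,j)\in\mathcal{E}} \big\|P_{\boldsymbol{g}^d_{ij}}(\boldsymbol{x}_i-\boldsymbol{x}_j)\big\|^2 ,
\end{align*}
where each undirected edge is counted once and $\boldsymbol{x}_j := 0$ whenever $j \in \mathcal{V}_L$. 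Hence $\mathcal{B}_{ff} \succeq 0$, and $\mathcal{B}_{ff}$ is singular if and only if some nonzero follower motion annihilates every edge term. Assumption~\ref{A1} (equivalently, Lemma~\ref{lem1}) says no such motion exists. Every claim in the lemma will be reduced to exhibiting, or ruling out, such a kernel vector.

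\textbf{Step 1 (two non-parallel neighbours).} Fix a follower $i$. If $\mathcal{N}_i = \emptyset$, or if all desired bearings $\boldsymbol{g}^d_{ik}$, $k \in \mathcal{N}_i$, are parallel to one unit vector $\boldsymbol{d}$, take $\boldsymbol{x}$ with $\boldsymbol{x}_i = \boldsymbol{d}$ and all other entries zero. Then $P_{\boldsymbol{g}^d_{ik}}\boldsymbol{d} = 0$ for every $k \in \mathcal{N}_i$, and all other edge terms vanish, so $\boldsymbol{x} \in \ker \mathcal{B}_{ff}$. This contradicts Assumption~\ref{A1}. Therefore $i$ has at least two neighbours $j_1, j_2$ with non-parallel desired bearings, and these necessarily sit at distinct positions $\boldsymbol{p}^d_{j_1} \neq \boldsymbol{p}^d_{j_2}$.

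\textbf{Step 2 (criticality of the edges $(i,j_1)$, $(i,j_2)$ in $\mathcal{G}$).} Partition $\mathcal{N}_i$ into direction classes $D_1, \dots, D_r$, with $r \ge 2$ by Step 1. I would choose $j_1, j_2$ from two distinct classes and argue via the same kernel test applied to $\mathcal{B}_{ff}$ of $\mathcal{G}$ with the edge removed.

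In the case $r = 2$ with $|D_1| = 1$, set $D_1 = \{j_1\}$. Removing $(i,j_1)$ leaves all of $i$'s bearings parallel, so the single-node vector from Step 1 is again a kernel vector and the reduced matrix is singular. The same argument applies to $j_2$ if $|D_2| = 1$.

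For the general case I would not confine the kernel vector to node $i$. Instead I would build a vector supported on the set $S$ of followers that are bearing-rigidly attached to $i$ through edges other than $(i,j_1)$ and that have no leader in their rigid closure. On $S$ I would take a rigid rotation-free motion, namely a translation along $\boldsymbol{g}^d_{ij_1}$ of the whole block $S$. This motion annihilates all edge terms internal to $S$ and, by the construction of $S$, every edge leaving $S$ other than $(i,j_1)$.

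\textbf{Main obstacle.} The difficulty is Step 2 when $\mathcal{G}$ contains redundant edges at $i$. Then $S$ may connect to the leaders through another path, and the translation above fails to be a kernel vector. I would first try to show, by the rank identity $\operatorname{rank}\mathcal{B}_{ff} = 2n_f$ and an edge-counting argument on the direction classes, that such a choice of $j_1, j_2$ always exists whose removal drops the rank. If that counting argument breaks down, the fallback is to identify $j_1, j_2$ precisely as neighbours whose edges are critical in $\mathcal{G}$, and prove their existence by contradiction. Suppose every edge at $i$ could be deleted individually without loss of rigidity. I would then try to derive from the quadratic form that $i$ remains rigid with no edges at all, which contradicts Step 1. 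This last implication is the step I expect to need the most care.
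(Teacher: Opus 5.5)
\textbf{Step~1.} Your Step~1 is correct, and it is in substance the whole of the paper's proof. The paper argues that if $i$ has at most one neighbour, or only parallel desired bearings, then $\boldsymbol{p}_i^d$ is not uniquely determined, so $\det(\mathcal{B}_{ff})=0$ by Lemma~\ref{lem1}, contradicting Assumption~\ref{A1}. Your explicit kernel vector ($\boldsymbol{x}_i=\boldsymbol{d}$, zero elsewhere) in the quadratic form is a cleaner version of that argument. The paper offers nothing further for the edge-removal clause. Its argument covers that clause only when deleting $(i,j_1)$ or $(i,j_2)$ leaves $i$ with a single bearing direction, which is exactly your singleton-class case in Step~2.

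\textbf{The gap: Step~2.} The general case of Step~2 is a genuine gap, and it cannot be closed, because the edge-removal clause is false without extra hypotheses. Take one follower $i$ at the origin joined to three leaders at $(1,0)$, $(0,1)$, $(-1,-1)$, so the desired bearings $\boldsymbol{g}^d_{ik}$ are pairwise non-parallel.
\begin{itemize}
\item Then $\mathcal{B}_{ff}=\sum_{k=1}^{3}P_{\boldsymbol{g}^d_{ik}}$ with $\det\mathcal{B}_{ff}=2$, so Assumption~\ref{A1} holds.
\item By your own kernel test, a single-follower $\mathcal{B}_{ff}$ is singular only if all remaining bearings at $i$ are parallel.
\item Deleting any one edge leaves two non-parallel bearings, so $\mathcal{B}_{ff}$ stays nonsingular.
\end{itemize}
So no edge at $i$ is critical. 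No choice of $j_1,j_2$, no rank or edge-counting argument, and no block translation on $S$ can produce one.

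The same example refutes your fallback. Each edge at $i$ can be removed individually without loss of rigidity, yet removing all of them destroys it. Redundancy of each edge separately says nothing about removing them jointly.

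\textbf{What can be proved.} You can prove the existence statement of Step~1. You can also prove criticality under an additional structural hypothesis, e.g.\ that deleting $(i,j_1)$ (resp.\ $(i,j_2)$) leaves all remaining bearings at $i$ collinear; $\mathcal{N}_i=\{j_1,j_2\}$ is a special case. Otherwise the clause must be imposed as an assumption on $\mathcal{G}$ rather than derived from Assumption~\ref{A1}.
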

\begin{pf}
Using the ``if part" of Lemma \ref{lem1} under Assumption~\ref{A1} (i.e. $det(\mathcal{B}_{ff})\neq0$), the desired position for any follower agent \( i \), denoted as \( \boldsymbol{p}_i^d \in \mathbb{R}^2, i \in \mathcal{V}_{F} \), is uniquely determined requiring at least two distinct neighbors (which can be a leader or follower) at unique positions with non-parallel desired bearings. If a follower agent \( i \) has at most one neighbor \( j \) with a unique position, or the neighbors at unique positions have parallel desired bearings, then the position of agent \( i \) becomes non-unique, which, by the ``only if" part of Lemma~\ref{lem1}, implies that \( \det(\mathcal{B}_{ff}) = 0 \), which contradicts Assumption \ref{A1}. Therefore, there exist at least two neighbors at unique positions for any $i \in \mathcal{V}_f$, with non-parallel desired bearings for non-singular $\mathcal{B}_{ff}$.\qed
\end{pf} 
Finally, we now formally state the subproblem as follows.
\begin{subproblem}
Consider a formation control problem for a follower agent $i \in \mathcal{V}_{F}$ with respect to its own local frame of reference. Given the relative position of the reference nodes $j_1$ and $j_2$ as $\boldsymbol{p}_{ij_1} = [x_{i1} ,\ y_{i1}]^{\top}$ and $\boldsymbol{p}_{ij_2} = [x_{i2},\ y_{i2}]^{\top}$, respectively. By leveraging Assumption \ref{A2}, we need to achieve \textit{formation tracking}, i.e., the agent $i$ aims to satisfy the formation subtask $\mathcal{F}_i(\boldsymbol{g}(t))$ as
    \begin{align}\label{fsub}
       \mathcal{F}_i(\boldsymbol{g}(t)) :=   \lim_{t \rightarrow \infty}\|\boldsymbol{g}_{ij}(t) - \boldsymbol{g}_{ij}^{d}\| = 0, \forall j \in \{j_1,j_2\}. 
    \end{align}  
\label{sp}
\end{subproblem} 
Next, we show that solving the Subproblem \ref{sp} for each follower agent $i\in\mathcal{V}_F$ also solves the Problem \ref{mainprob}.
\begin{theorem}\label{theo1}
Given a heterogeneous multi-agent system, connected as $\mathcal{G}$ with a leader-follower framework, satisfying Assumptions \ref{A1} and \ref{A2}, and if each follower agent $i$ achieves the formation subtask $\mathcal{F}_i(\boldsymbol{g}(t))$ \eqref{fsub}, then the multi-agent system achieves formation $\mathcal{F}(\boldsymbol{g}(t))$ as in Problem \ref{mainprob}.
\end{theorem}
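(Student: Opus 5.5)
The plan is to show that the local subtasks \eqref{fsub} together force every follower's position to its desired position. Lemma~\ref{lem1} then gives all the desired bearings on \emph{every} edge, which is exactly $\mathcal{F}(\boldsymbol{g}(t))$ in Definition~\ref{mform}. The argument is inductive along the ``reference'' structure of Lemma~\ref{lem2}, with a linear-algebra step at each follower.

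\textbf{Step 1 (single follower).} Fix $i\in\mathcal{V}_F$ with reference nodes $j_1,j_2$ from Lemma~\ref{lem2}. Their desired bearings $\boldsymbol{g}_{ij_1}^d,\boldsymbol{g}_{ij_2}^d$ are non-parallel. Write
\[
\boldsymbol{p}_i=\boldsymbol{p}_{j_1}-\|\boldsymbol{p}_{ij_1}\|\boldsymbol{g}_{ij_1}=\boldsymbol{p}_{j_2}-\|\boldsymbol{p}_{ij_2}\|\boldsymbol{g}_{ij_2}.
\]
Suppose $\boldsymbol{g}_{ij_k}(t)\to\boldsymbol{g}^d_{ij_k}$ and $\boldsymbol{p}_{j_k}(t)-\boldsymbol{p}^d_{j_k}(t)\to 0$ for $k=1,2$. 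Then the $2\times2$ system
\[
[\boldsymbol{g}_{ij_1}\ \ -\boldsymbol{g}_{ij_2}]\,[\|\boldsymbol{p}_{ij_1}\|,\|\boldsymbol{p}_{ij_2}\|]^\top=\boldsymbol{p}_{j_1}-\boldsymbol{p}_{j_2}
\]
has a coefficient matrix converging to a nonsingular one. Inversion is continuous near nonsingular matrices, so the distances converge to the desired ones, and hence $\boldsymbol{p}_i-\boldsymbol{p}_i^d\to0$. Equivalently, $\boldsymbol{p}_i$ is the intersection of two non-parallel lines through $\boldsymbol{p}_{j_1},\boldsymbol{p}_{j_2}$, and this intersection depends continuously on the data. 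The time-varying translation $\boldsymbol{v}_l t$ cancels because all desired positions share the velocity $\boldsymbol{v}_l$.

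\textbf{Step 2 (propagation).} The leaders track $\boldsymbol{p}_l^d(t)$ by assumption, and they occupy distinct positions by Lemma~\ref{lem1}, since $n_l\ge2$. I would argue that the follower set can be ordered so that each follower's two reference nodes are leaders or earlier followers, and then apply Step 1 inductively. This ordering is the main obstacle: the reference graph may contain cycles among followers. If it does, I would replace induction by a stacked linear argument. For each follower, the relation $P_{\boldsymbol{g}_{ij}}(\boldsymbol{p}_j-\boldsymbol{p}_i)=0$ holds in the limit on its reference edges, which yields a linear system
\[
\mathcal{B}'_{ff}(t)\boldsymbol{p}_f=-\mathcal{B}'_{fl}(t)\boldsymbol{p}_l.
\]
Here $\mathcal{B}'$ is the bearing Laplacian restricted to the reference edges. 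By the criticality of the reference edges in Lemma~\ref{lem2}, together with Assumption~\ref{A1}, I would claim that $\mathcal{B}'_{ff}$ evaluated at $\boldsymbol{g}^d$ is nonsingular. Continuity of the inverse near $\boldsymbol{g}^d$ then gives $\boldsymbol{p}_f(t)-\boldsymbol{p}_f^d(t)\to0$. Justifying the nonsingularity of the reduced matrix is where I expect the real work, and I would try to deduce it directly from the ``only if'' part of Lemma~\ref{lem1} applied to the subgraph of reference edges.

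\textbf{Step 3 (conclusion).} Once all positions converge to $\boldsymbol{p}^d(t)$, every edge bearing
\[
\boldsymbol{g}_{ij}=\boldsymbol{p}_{ij}/\|\boldsymbol{p}_{ij}\|
\]
converges to $\boldsymbol{g}_{ij}^d$. This uses that $\|\boldsymbol{p}^d_{ij}\|>0$ is constant under common translation, so normalization is continuous there. Therefore $\|\boldsymbol{g}(t)-\boldsymbol{g}^d\|\to0$, which is Definition~\ref{mform} and solves Problem~\ref{mainprob}.
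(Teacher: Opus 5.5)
\textbf{There is a genuine gap, and it is the step you flag yourself in Step~2.} Your overall route matches the paper's: reference-edge bearings $\Rightarrow$ follower positions $\Rightarrow$ all edge bearings. Steps~1 and~3 are sound. The trouble is that Step~2 cannot be closed the way you propose, for three reasons:
\begin{itemize}
\item Criticality of each reference edge in the \emph{full} graph (Lemma~\ref{lem2}) says nothing about whether the \emph{union} of the chosen reference edges is bearing rigid.
\item Applying the ``only if'' part of Lemma~\ref{lem1} to the reference subgraph is circular, because it presupposes that the reference bearings already determine the positions.
\item No acyclic ordering of followers follows from Lemma~\ref{lem2} either.
\end{itemize}

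\textbf{Counterexample.} Take leaders $L_1=(0,0)$ and $L_2=(4,0)$, followers $a,b$ with $\boldsymbol{p}^d_a=(1,2)$ and $\boldsymbol{p}^d_b=(3,2)$, and edges $(a,L_1),(a,b),(b,L_1),(b,L_2)$.
\begin{itemize}
\item $\mathcal{B}_{ff}$ is nonsingular and is a sum of exactly $4=2n_f$ rank-one edge terms (each $P_{\boldsymbol{g}}$ has rank one in the plane). So every edge is critical.
\item Hence $\{L_1,b\}$ for $a$ and $\{L_2,a\}$ for $b$ are admissible reference pairs in Lemma~\ref{lem2}, with non-parallel desired bearings.
\item These choices give only three distinct reference edges, so your $\mathcal{B}'_{ff}$ is a sum of three rank-one terms in $\mathbb{R}^{4\times 4}$ and is singular.
\item Explicitly, the configuration $\boldsymbol{p}_a=(s,2s)$, $\boldsymbol{p}_b=(4-s,2s)$ with $s\in(0,2)$, translated by $\boldsymbol{v}_l t$, satisfies every subtask \eqref{fsub} exactly for all $t$. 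Yet $\boldsymbol{g}_{bL_1}\neq\boldsymbol{g}^d_{bL_1}$ whenever $s\neq 1$.
\end{itemize}
So Step~2 fails under the stated hypotheses, and with it the statement, unless you add a condition on how reference nodes are chosen. The paper's own proof asserts the same implication by citing Lemmas~\ref{lem1} and~\ref{lem2} (``unique rigid formation with reference nodes'') without further argument. It will not supply the missing step.

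\textbf{What you need.} Add an explicit hypothesis that the reference edges, together with the leaders, form an infinitesimally bearing-rigid subgraph, i.e.\ $\mathcal{B}'_{ff}(\boldsymbol{g}^d)$ is nonsingular. An acyclic reference ordering is a sufficient special case, and there your Step~1 induction already completes the proof. The paper's simulation has this form: followers 3--6 reference leaders 1 and 2, follower 7 references 3 and 4, and follower 8 references 5 and 6.

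\textbf{Repairs needed in the general (cyclic) case.} Under the rigidity hypothesis, your stacked argument works after one repair. Because $\boldsymbol{p}_l(t)$ grows linearly in $t$, continuity of $\mathcal{B}'_{ff}(t)^{-1}$ alone does not give $\boldsymbol{p}_f-\boldsymbol{p}^d_f\to 0$. Use $\mathcal{B}'(\mathbf{1}\otimes I_2)=0$ to rewrite the system relative to one leader's position, so that the right-hand side is constant before you pass to the limit. Two smaller points:
\begin{itemize}
\item $\mathcal{B}'_{ff}(t)\boldsymbol{p}_f=-\mathcal{B}'_{fl}(t)\boldsymbol{p}_l$ holds exactly at every $t$, not only in the limit.
\item Distinct leader positions are a standing hypothesis, not a consequence of $n_l\ge 2$.
\end{itemize}
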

\begin{pf}
   Given a unique rigid formation $\mathcal{F}(\boldsymbol{g}(t))$, described by the desired bearing configuration $ \boldsymbol{g}^{d}$, formation subtask $\mathcal{F}_i(\boldsymbol{g}(t))$ implies a unique local formation for each follower agent $i$, with respect to their reference nodes, as the local bearings $\boldsymbol{g}_{ij}(t) \rightarrow \boldsymbol{g}_{ij}^{d}, \forall j \in \{j_1,j_2\}$. Therefore, with the leaders moving as predefined and each follower satisfying $\mathcal{F}_i(\boldsymbol{g}(t))$, the collective satisfaction of all formation subtasks, i.e., $\{\mathcal{F}_i(\boldsymbol{g}(t))\}_{{\forall i\in \mathcal{V}_{F}}}$, implies that $\boldsymbol{g}_{ij}(t) \rightarrow \boldsymbol{g}_{ij}^{d}, \forall (i,j) \in \mathcal{E}$, due to unique rigid formation with reference nodes using the Lemma \ref{lem1} and \ref{lem2}, and the desired formation $\mathcal{F}(\boldsymbol{g}(t))$ is achieved. \qed 
\end{pf}

We can observe in Fig. \ref{probset} that the bearing $\boldsymbol{g}_{ij_1}$ ($\boldsymbol{g}_{ij_2}$) of the reference node $j_1$ (reference node $j_2$) with respect to the follower agent $i$, is along the vector $\boldsymbol{p}_{ij_1}$ ($\boldsymbol{p}_{ij_2}$). In order to achieve the desired bearing vectors $\boldsymbol{g}_{ij_1}^{d}$ and $\boldsymbol{g}_{ij_2}^{d}$, the follower agent must reach the bearing-guided target $\boldsymbol{p}_{iT}$ (discussed in Section \ref{Bearing guided Position Det}).
The method for achieving this objective is discussed in the following Section \ref{sec: cbf constraint design and control}.

\section{CBF-inspired Control for Subproblem}\label{sec: cbf constraint design and control}
Here, we set up the CBF-QP-like framework for Subproblem \ref{sp}, where tracking of the desired formation is considered, with the help of the formation subtask $\mathcal{F}_i(\boldsymbol{g}(t))$ \eqref{fsub}, addressed using a bearing-guided target.
\subsection{Determination of Bearing-Guided Target}\label{Bearing guided Position Det} 
Consider the formation subproblem for each follower agent $i$ where we have at least two reference nodes $j_{1},j_{2}$ with their desired bearing vectors \( \boldsymbol{g}_{ij_k}^d = [g_{ij_k}^{d_x},\ g_{ij_k}^{d_y}]^{\top}, k \in \{1, 2\} ,\) which can also be expressed in terms of their slope as \( m_{ij_k}^d = \frac{g_{ij_k}^{d_y}}{g_{ij_k}^{d_x}} \).
The bearing-guided target in the agent $i$'s frame of reference is defined as follows.
\begin{definition}[Bearing-Guided Target \(\boldsymbol{p}_{iT}\)]
Given \\
Subproblem \ref{sp} for agent $i\in \mathcal{V}_{F}$, using desired bearing slopes ($m_{ij_1}^d$ and $m_{ij_2}^d$), desired relative reference node positions ($\boldsymbol{p}_{ij_1}$ and $\boldsymbol{p}_{ij_2}$), and using the desired formation position vector $\boldsymbol{p}_i^d$, for agent $i$, as defined in \eqref{dpos}, the \emph{bearing-guided target} \(\boldsymbol{p}_{iT}\) for follower agent $i$ is defined as $\boldsymbol{p}_{iT} = \boldsymbol{p}_{i}^d - \boldsymbol{p}_i$, and explicitly computed as 
\begin{align}\label{formation_target}
    \boldsymbol{p}_{iT} = A_i^{-1} \boldsymbol{\sigma}_i,
\end{align}
where $A_i = \begin{bmatrix}
        -m_{ij_1}^d & 1 \\
        -m_{ij_2}^d & 1
    \end{bmatrix}, \boldsymbol{\sigma}_i = \begin{bmatrix}
        [-m_{ij_1}^d,\ 1] \boldsymbol{p}_{ij_1} \\
        [-m_{ij_2}^d,\ 1] \boldsymbol{p}_{ij_2} 
    \end{bmatrix}.$
\end{definition}
Furthermore, the bearing-guided target $\boldsymbol{p}_{iT}$ can be interpreted as the unique point of intersection of the two non-parallel lines \( L_{j_1} \) and \( L_{j_2} \), where each line \( L_{j_k} \) passes through the point \(\boldsymbol{p}_{ij_k}\) with slope $m_{ij_k}^d$ (cf. Fig. \ref{probset}). Additionally, at \(\boldsymbol{p}_{iT} = 0\), i.e., $\boldsymbol{p}_i = \boldsymbol{p}_i^d$, the cosine distance \(\left(1 - \langle \boldsymbol{g}_{ij_k}, \boldsymbol{g}^d_{ij_k} \rangle \right)\) is zero, implying alignment with the desired bearing directions due to the unique formation in bearing as defined in Lemma \ref{lem1}. The magnitude of the bearing-guided target vector can be expressed as
$
    \|\boldsymbol{p}_{iT}\| = \|A_i^{-1} \boldsymbol{\sigma}_i\|,
$
and 
it will be leveraged in formulating a CBF-like constraint to enforce the desired formation in the following subsection.

\subsection{Controller Design for Formation Tracking}
We propose a QP-based optimization framework to solve the Subproblem \ref{sp} using the bearing-guided target $\boldsymbol{p}_{iT}$, as in \eqref{formation_target}, for each follower agent $i\in \mathcal{V}_{F}$. The objective is to satisfy the local subtask $\mathcal{F}_i(\boldsymbol{g}(t))$ by aligning the velocity $\boldsymbol{v}_i$ with the prescribed orientation $\boldsymbol{p}_{iT} = \boldsymbol{p}_i^d - \boldsymbol{p}_i$, while regulating the speed $\|\boldsymbol{v}_i\|$ to the leader agent's speed $\|\boldsymbol{v}_{l}\| = \|\dot{\boldsymbol{p}}_{i}^d\|$ (due to Lemma \ref{lem1}), so that the desired bearing configuration $\boldsymbol{g}^d$ is achieved, i.e.,  $\lim_{t \rightarrow \infty}\|\boldsymbol{p}_{iT}\| = 0$.
As the reference nodes $j_1,j_2$ move, the bearing-guided target evolves over time, yielding a time-varying reference trajectory $\boldsymbol{p}_{iT}: \mathbb{R}_{0}^{+} \rightarrow \mathbb{R}^2$. Also under Assumption \ref{A1}, the bearing-guided target ${\boldsymbol{p}}_{iT}$ varies with velocity $\dot{\boldsymbol{p}}_{iT} = \dot{\boldsymbol{p}}_{i}^{d} - \dot{\boldsymbol{p}}_{i}=\boldsymbol{v}_{l} - \boldsymbol{v_i}$. So, we enforce the alignment of velocity $\boldsymbol{v}_i$ along $\boldsymbol{p}_{iT}$ with the help of reach CBF.
For the required alignment between $\boldsymbol{v}_i$ and $\boldsymbol{p}_{iT}$, we define reach CBF $b_i$ based on the angle \(\theta_i := \theta_i(t)\) between \(\boldsymbol{v}_i\) and \(\boldsymbol{p}_{iT}\):
\begin{align}\label{cbfr}
    b_i(\hat{\boldsymbol{v}}_i,\hat{\boldsymbol{p}}_{iT}) := \langle\hat{\boldsymbol{v}}_i,\hat{\boldsymbol{p}}_{iT}\rangle - \cos(\theta_{i\circ}),
\end{align}
where $\hat{\boldsymbol{v}}_i = \frac{{\boldsymbol{v}}_i}{\|{\boldsymbol{v}}_i\|}, \hat{\boldsymbol{p}}_{iT} = \frac{{\boldsymbol{p}}_{iT}}{\|{\boldsymbol{p}}_{iT}\|}$, $\theta_{i\circ} =\cos^{-1}(\langle\hat{\boldsymbol{v}}_i(0),\hat{\boldsymbol{p}}_{iT}(0)\rangle)$, and $\max_{\forall \hat{\boldsymbol{v}}_i, \hat{\boldsymbol{p}}_{iT} \in \mathcal{S}} b_i(\hat{\boldsymbol{v}}_i,\hat{\boldsymbol{p}}_{iT}) = 1 - \cos(\theta_{i\circ})$. The compact set $X_i$ is defined using reach CBF $b_i$ as 
\begin{align}\label{domain}
X_i := \{(\hat{\boldsymbol{v}}_i,\hat{\boldsymbol{p}}_{iT}) \in \mathcal{S} \times \mathcal{S}: b_i(\hat{\boldsymbol{v}}_i,\hat{\boldsymbol{p}}_{iT}) \geq 0\},
\end{align}
and the boundary set $\partial X_i := \{(\hat{\boldsymbol{v}}_i,\hat{\boldsymbol{p}}_{iT}) \in \mathcal{S} \times \mathcal{S}: b_i(\hat{\boldsymbol{v}}_i,\hat{\boldsymbol{p}}_{iT}) = 0\}$. 
{In this work, the notion of safety corresponds to maintaining a task-consistent motion (bounded error with directional alignment), rather than collision avoidance.}
Here for given $\hat{\boldsymbol{p}}_{iT}$, $\forall \hat{\boldsymbol{v}}_i \in \mathcal{S}$, $b_i(\hat{\boldsymbol{v}}_i, \hat{\boldsymbol{p}}_{iT})$ has unique maxima as required by Definition \ref{cbfdef} of reach CBF. Thus, we enforce the reachability of the set $R_i$ for formation tracking, with an arbitrarily chosen small angle $0<\gamma_i \leq \|\theta_{i\circ}\|$ as
\begin{align} \label{reachset}
    R_i:= \{(\hat{\boldsymbol{v}}_i,\hat{\boldsymbol{p}}_{iT}) \in \mathcal{S} \times \mathcal{S}: b_i(\hat{\boldsymbol{v}}_i,\hat{\boldsymbol{p}}_{iT}) \geq \mu_i\},
\end{align}
where $\mu_i = \cos(\gamma_i) - \cos(\theta_{i\circ}) > 0$, which results in $R_i \subset X_i$. Here, the parameter $\gamma_i$ represents the alignment error tolerance between $\hat{\boldsymbol{v}}_i$ and $\hat{\boldsymbol{p}}_{iT}$. 
Now, after simplifying the reach CBF constraint in \eqref{QPdef}, with the reach CBF for formation tracking as defined in \eqref{cbfr}, together with the system dynamics (\ref{sys}), we obtain the linear constraint in $\boldsymbol{u}_i$ as $K_f + K_h^{\top}\boldsymbol{u}_i \geq \frac{\mu_i}{\tau_i}$ with
\begin{align}\label{movet}
&K_h^{\top}(\hat{\boldsymbol{v}}_i,\hat{\boldsymbol{p}}_{iT}) = \hat{\boldsymbol{p}}_{iT}^{\top}\frac{1}{\|\boldsymbol{v}_i\|}\left[I_2 - \boldsymbol{v}_i\boldsymbol{v}_i^{\top}\right]h_{v_i}(\boldsymbol{x}_i), \nonumber \\
&K_f(\hat{\boldsymbol{v}}_i,\hat{\boldsymbol{p}}_{iT}) = \hat{\boldsymbol{p}}_{iT}^{\top}\frac{1}{\|\boldsymbol{v}_i\|}\left[I_2 - \boldsymbol{v}_i\boldsymbol{v}_i^{\top}\right]f_{v_i}(\boldsymbol{x}_i) \nonumber \\ 
&+ \hat{\boldsymbol{v}}_i^{\top}{\frac{1}{\|\boldsymbol{p}_{iT}\|}\left[I_2 - \boldsymbol{p}_{iT}\boldsymbol{p}_{iT}^{\top}\right](\boldsymbol{v}_l - \boldsymbol{v}_i)}. 
\end{align}
Using Theorem \ref{thm1} and the constraint \eqref{movet}, our goal is to reach the set $R_i$ (as in \eqref{reachset}) in time $\tau_i$, where $\hat{\boldsymbol{v}}_i$ and $\hat{\boldsymbol{p}}_{iT}$ are aligned with small angle error $\gamma_i$, i.e., $\cos(\theta_i(\tau_i)) \geq \cos(\gamma_i)$, while being within the compact set $X_i$ (as in \eqref{domain}). 

We now formulate a controller using a QP optimization problem for every agent $i$, using constraint \eqref{movet}, together with a nominal controller $\boldsymbol{u}_{iv}$ for the speed tracking of the moving formation. The feasibility of the proposed QP problem depends on the validity of the reach CBF constraint imposed, which becomes ill-posed for abrupt reversal when the bearing-guided target direction is anti-aligned with the agent’s velocity. Thus, we assume that
\begin{assum} \label{A4}
    The bearing-guided target direction \(\hat{\boldsymbol{p}}_{iT}\) is not anti-aligned with the agent’s velocity \(\boldsymbol{v}_i\), i.e., $\langle\hat{\boldsymbol{p}}_{iT}, \hat{\boldsymbol{v}}_i\rangle \neq -1, \forall t \geq 0$.
\end{assum}
{The above assumption excludes a measure-zero case, and in practice the reach-CBF steers the system away from it, with fail-safe check adding a small perturbation near anti-alignment, which avoids ill-conditioning.}
Furthermore, the above constraint becomes invalid only in the degenerate case when $\left[I_2 - \boldsymbol{v}_i\boldsymbol{v}_i^{\top}\right]h_{v_i} = 0_{2}$ or $\langle\boldsymbol{p}_{iT},\boldsymbol{v}_i^{\perp}\rangle = 0$, where the vector $\boldsymbol{v}_i^{\perp} = P_{\boldsymbol{v}_i}\boldsymbol{v}_i$, $P_{\boldsymbol{v}_i} = \left[I_2 - \boldsymbol{v}_i\boldsymbol{v}_i^{\top}\right]$. However, $\langle\boldsymbol{p}_{iT},\boldsymbol{v}_i^{\perp}\rangle = 0$ implies either alignment, requiring no correction, or anti-alignment, which is excluded by Assumption \ref{A4}. Since the system (\ref{sys}) is controllable ($Null(h_{v_i}) = \emptyset$), the constraint remains valid (i.e. $\left[I_2 - \boldsymbol{v}_i\boldsymbol{v}_i^{\top}\right]h_{v_i} \neq 0_{2}$). After verifying the validity of the reach CBF constraint, the QP-based control formulation is presented below. 
\begin{theorem}
    Consider the reach CBF $b_i(\hat{\boldsymbol{v}}_i,\hat{\boldsymbol{p}}_{iT})$ as in \eqref{cbfr} with constraint \eqref{movet}, where Assumption \ref{A4} holds and if input $\boldsymbol{u}_i$ is obtained using QP optimization problem  
\begin{align}\label{QPws}
    &{\boldsymbol{u}}_i = \arg \min_{\boldsymbol{q} \in \mathbb{R}^2} \quad \frac{1}{2}(\boldsymbol{q} -  \boldsymbol{u}_{iv})^{\top}(\boldsymbol{q}-  \boldsymbol{u}_{iv}), \nonumber \\
    &\text{s.t.} \quad K_{f}(\hat{\boldsymbol{v}}_i,\hat{\boldsymbol{p}}_{iT}) + K_{h}^{\top}(\hat{\boldsymbol{v}}_i,\hat{\boldsymbol{p}}_{iT})\boldsymbol{q} \geq \frac{\mu_i}{\tau_i}, 
\end{align}
with $K_{f}, K_{h}$ as in \eqref{movet}, along with the state-feedback controller
\begin{align}\label{lyac}
    &\boldsymbol{u}_{iv} = h_{v_i}^{-1}(-f_{v_i}  - K_{i1}\boldsymbol{\xi}_{ip} + K_{i1}(\boldsymbol{v}_l - \boldsymbol{v}_i) + K_{i2}\boldsymbol{\xi}_{iv}),
\end{align}
where $K_{i2} = \kappa_{i2}I_{2},K_{i1} = \kappa_{i1}I_{2}, 
\kappa_{i1}>1,\kappa_{i2}>\frac{1+\kappa_{i1}}{2}, \boldsymbol{\xi}_{ip} = \boldsymbol{p}_{iT}, \boldsymbol{\xi}_{iv} = \boldsymbol{v}_l - \boldsymbol{v}_i + K_{i1}\boldsymbol{\xi}_{ip}$,
then the formation subtask $\mathcal{F}_i(\boldsymbol{g}(t))$ as \eqref{fsub} is achieved for an agent $i\in \mathcal{V}_F$.
\label{theo2}
\end{theorem}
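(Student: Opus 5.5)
The proof has two parts. First, the nominal controller \eqref{lyac} alone makes $\boldsymbol{p}_{iT}\to 0$, via a backstepping-type Lyapunov argument in the error coordinates $(\boldsymbol{\xi}_{ip},\boldsymbol{\xi}_{iv})$. Second, the reach CBF constraint in \eqref{QPws} only enforces the alignment of $\hat{\boldsymbol{v}}_i$ with $\hat{\boldsymbol{p}}_{iT}$ (Theorem \ref{thm1}), and this alignment is compatible with the nominal convergence. Once $\|\boldsymbol{p}_{iT}\|\to 0$, i.e. $\boldsymbol{p}_i\to\boldsymbol{p}_i^d$, the bearings $\boldsymbol{g}_{ij_k}$ for $k=1,2$ converge to $\boldsymbol{g}_{ij_k}^d$. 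This holds because $\boldsymbol{p}_{iT}$ is the unique intersection of the non-parallel lines $L_{j_1},L_{j_2}$ (Lemma \ref{lem2}), so $\mathcal{F}_i(\boldsymbol{g}(t))$ in \eqref{fsub} is achieved.

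\textbf{Lyapunov step.} Use $\dot{\boldsymbol{p}}_{iT}=\boldsymbol{v}_l-\boldsymbol{v}_i$ and the fact that $\boldsymbol{v}_l$ is constant. Then $\dot{\boldsymbol{\xi}}_{ip}=\boldsymbol{\xi}_{iv}-K_{i1}\boldsymbol{\xi}_{ip}$ and $\dot{\boldsymbol{\xi}}_{iv}=-f_{v_i}-h_{v_i}\boldsymbol{u}_i+K_{i1}(\boldsymbol{v}_l-\boldsymbol{v}_i)$. Substituting $\boldsymbol{u}_i=\boldsymbol{u}_{iv}$ from \eqref{lyac} gives
\begin{align*}
\dot{\boldsymbol{\xi}}_{iv}=K_{i1}\boldsymbol{\xi}_{ip}-K_{i2}\boldsymbol{\xi}_{iv}.
\end{align*}
Take $V_i=\tfrac12\|\boldsymbol{\xi}_{ip}\|^2+\tfrac12\|\boldsymbol{\xi}_{iv}\|^2$. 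Then
\begin{align*}
\dot V_i=-\kappa_{i1}\|\boldsymbol{\xi}_{ip}\|^2+(1+\kappa_{i1})\langle\boldsymbol{\xi}_{ip},\boldsymbol{\xi}_{iv}\rangle-\kappa_{i2}\|\boldsymbol{\xi}_{iv}\|^2 .
\end{align*}
By Young's inequality, $(1+\kappa_{i1})|\langle\cdot,\cdot\rangle|\le \tfrac{1+\kappa_{i1}}{2}(\|\boldsymbol{\xi}_{ip}\|^2+\|\boldsymbol{\xi}_{iv}\|^2)$. This yields
\begin{align*}
\dot V_i\le-\big(\kappa_{i1}-\tfrac{1+\kappa_{i1}}{2}\big)\|\boldsymbol{\xi}_{ip}\|^2-\big(\kappa_{i2}-\tfrac{1+\kappa_{i1}}{2}\big)\|\boldsymbol{\xi}_{iv}\|^2 .
\end{align*}
Both coefficients are positive exactly under $\kappa_{i1}>1$ and $\kappa_{i2}>\frac{1+\kappa_{i1}}{2}$, which is presumably why the theorem imposes these gain conditions. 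Hence the origin is exponentially stable and $\boldsymbol{p}_{iT}\to0$, $\boldsymbol{v}_i\to\boldsymbol{v}_l$.

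\textbf{QP step.} The QP \eqref{QPws} has a single affine constraint, and $K_h\neq 0$ by the validity discussion preceding the theorem, which uses Assumption \ref{A4} and controllability. So the problem is feasible, and the solution has the closed form
\begin{align*}
\boldsymbol{u}_i=\boldsymbol{u}_{iv}+\max\!\Big(0,\ \tfrac{\mu_i/\tau_i-K_f-K_h^\top\boldsymbol{u}_{iv}}{\|K_h\|^2}\Big)K_h,
\end{align*}
which is locally Lipschitz. Theorem \ref{thm1} then gives forward invariance of $X_i$ and reachability of $R_i$ within $\tau_i$. In $R_i$ the velocity points toward the target within angle $\gamma_i$, so $\frac{d}{dt}\|\boldsymbol{p}_{iT}\|$ is driven by $\|\boldsymbol{v}_l\|\cos(\cdot)-\|\boldsymbol{v}_i\|\cos\theta_i$. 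Here the angle in the first term is the one between $\boldsymbol{v}_l$ and $\hat{\boldsymbol{p}}_{iT}$, not the CBF angle $\theta_i$.

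\textbf{Main obstacle.} The difficulty is showing that the correction term $\lambda_i K_h$ added when the constraint is active does not destroy $\dot V_i<0$. The plan is to observe that $K_h$ acts only through $P_{\boldsymbol{v}_i}h_{v_i}$, i.e. it changes only the direction of $\boldsymbol{v}_i$ and not its speed. It rotates $\boldsymbol{v}_i$ toward $\hat{\boldsymbol{p}}_{iT}$, which is also the direction the nominal law steers $\boldsymbol{v}_i-\boldsymbol{v}_l$ toward, since $\boldsymbol{\xi}_{iv}\to 0$ gives $\boldsymbol{v}_i\approx \boldsymbol{v}_l+\kappa_{i1}\boldsymbol{p}_{iT}$. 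So I would try to bound the cross term $\langle\boldsymbol{\xi}_{iv},h_{v_i}\lambda_iK_h\rangle$ as nonpositive, or as vanishing once inside $R_i$ (where the constraint becomes inactive after time $\tau_i$). If that sign argument fails, the fallback is to argue that the constraint is active only on the finite interval $[0,\tau_i]$. After that the nominal law applies, and the exponential decay of $V_i$ then yields \eqref{fsub}.
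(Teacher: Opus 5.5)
Your route is the paper's own: reach $R_i$ by integrating the constraint as in Theorem~\ref{thm1}, then run the same Lyapunov computation with $\boldsymbol u_i=\boldsymbol u_{iv}$ substituted, and your gain bookkeeping agrees with the paper's. The paper joins the two halves only by asserting that the constraint sets the direction of $\boldsymbol v_i$ while $\boldsymbol u_{iv}$ sets its speed. You correctly identify this step as the crux, but neither of your ways of closing it works, so the gap remains.

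\textbf{Your fallback fails.} The constraint in \eqref{QPws} is imposed for all $t\ge 0$, not only on $[0,\tau_i]$, and it can never become permanently inactive. Since $b_i(0)=0$ and $b_i\le 1-\cos(\theta_{i\circ})$, no input can keep $\dot b_i\ge\mu_i/\tau_i$ beyond $t=\tau_i(1-\cos(\theta_{i\circ}))/\mu_i$. Concretely, at exact alignment $\hat{\boldsymbol v}_i=\hat{\boldsymbol p}_{iT}$, both $\hat{\boldsymbol p}_{iT}^\top P_{\hat{\boldsymbol v}_i}$ and $\hat{\boldsymbol v}_i^\top P_{\hat{\boldsymbol p}_{iT}}$ vanish, where $P_{\hat{\boldsymbol v}_i}=I_2-\hat{\boldsymbol v}_i\hat{\boldsymbol v}_i^\top$. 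Hence $K_f=0$, $K_h=0$, and the constraint reads $0\ge\mu_i/\tau_i$. Your feasibility claim $K_h\neq 0$ therefore fails exactly where the constraint drives the state, and the correction $\lambda_iK_h$ blows up like $1/|\sin\theta_i|$ on the way there. ``After $\tau_i$ the nominal law applies'' is a property of a different, switched controller, not of \eqref{QPws}.

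\textbf{Your sign argument fails for moving formations.} With $\boldsymbol u_i=\boldsymbol u_{iv}+\lambda_iK_h$, $\dot V_i$ acquires the extra term $-\lambda_i\boldsymbol\xi_{iv}^\top h_{v_i}K_h$. For $h_{v_i}=I_2$ (so $K_h=P_{\hat{\boldsymbol v}_i}\hat{\boldsymbol p}_{iT}/\|\boldsymbol v_i\|$) this equals
\begin{align*}
-\frac{\lambda_i}{\|\boldsymbol v_i\|}\Big(\boldsymbol v_l^\top P_{\hat{\boldsymbol v}_i}\hat{\boldsymbol p}_{iT}+\kappa_{i1}\|\boldsymbol p_{iT}\|\sin^2\theta_i\Big).
\end{align*}
This is nonpositive when $\boldsymbol v_l=\boldsymbol 0$. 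For $\boldsymbol v_l\neq\boldsymbol 0$, however, the first term has either sign and magnitude up to $\|\boldsymbol v_l\||\sin\theta_i|$. It therefore dominates as $\|\boldsymbol p_{iT}\|\to 0$, precisely where the nominal decrease $-c(\|\boldsymbol\xi_{ip}\|^2+\|\boldsymbol\xi_{iv}\|^2)$ vanishes.

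The root cause is that $\boldsymbol\xi_{iv}=\boldsymbol v_l-\boldsymbol v_i+K_{i1}\boldsymbol p_{iT}$ is a vector error, so $V_i$ penalises the direction of $\boldsymbol v_i$ as well as its speed. The nominal law steers $\boldsymbol v_i$ toward $\boldsymbol v_l+\kappa_{i1}\boldsymbol p_{iT}$, which is parallel to $\boldsymbol p_{iT}$ only if $\boldsymbol p_{iT}\parallel\boldsymbol v_l$. So the two objectives genuinely conflict, and even a speed-preserving rotation of $\boldsymbol v_i$ changes $V_i$.

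Since $\mathcal F_i$ only needs $\boldsymbol p_{iT}\to\boldsymbol 0$, which your Lyapunov step already delivers for $\boldsymbol u_i=\boldsymbol u_{iv}$, a complete proof must change something. One option is to alter the scheme, e.g.\ enforce the reach-CBF constraint only until $R_i$ is reached and then switch to $\boldsymbol u_{iv}$. The other is to explicitly bound the effect of the active constraint on $\dot V_i$. Substituting $\boldsymbol u_i=\boldsymbol u_{iv}$ into $\dot V$, as the paper does, does neither.
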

\begin{pf}
    We need to prove that for agent $i$, formation $\mathcal{F}_i(\boldsymbol{g}(t))$ is achieved as in \eqref{fsub}. With initial $\boldsymbol{v}_i(0)$, the barrier function $b_i(\boldsymbol{v}_i(0),\boldsymbol{p}_{iT}(0)) = 0$, and the set $R_i$ (as in \eqref{reachset}) is reached as by integrating the constraint in \eqref{movet} over $[0,\tau_i]$ with lower bound $\frac{\mu_i}{\tau_i}$ and $b_i(\hat{\boldsymbol{v}}_i(0),\boldsymbol{p}_{iT}(0))=0$, and following arguments analogous to Theorem \ref{thm1}, it follows that $b_i(\hat{\boldsymbol{v}}_i(\tau_i)) \geq \mu_i$, ensuring $\cos(\theta_i(\tau_i)) \geq \cos(\gamma_i)$.
    {Therefore, the directional CBF constraint ensures alignment independent of the agent's speed as per the nominal controller.}
    In addition to tracking the correct orientation for the moving formation, the agent tracks the speed of moving desired unique position $\boldsymbol{p}_{i}^{d}$, defined in \eqref{dpos}, only if the relative bearing-guided target position $\boldsymbol{\xi}_{ip} \rightarrow 0$ and the relative bearing-guided target velocity $\boldsymbol{\xi}_{iv} \rightarrow 0$ as $ (\boldsymbol{\xi}_{ip},\boldsymbol{\xi}_{iv}) = (0,0) \Rightarrow \boldsymbol{p}_{iT} = \boldsymbol{0}, \boldsymbol{v}_{l} = \boldsymbol{v}_i \Rightarrow \boldsymbol{g}_{ij} = \boldsymbol{g}_{ij}^d,\forall j \in \{j_1,j_2\}$.
    {Next, we control the agent's speed independent of the constraint as below.}
         Consider a Lyapunov function
            $V = \frac{1}{2}\boldsymbol{\xi}_{ip}^{\top}\boldsymbol{\xi}_{ip} + \frac{1}{2}\boldsymbol{\xi}_{iv}^{\top}\boldsymbol{\xi}_{iv}$.
        With the system dynamics \eqref{sys} and the state feedback controller \eqref{lyac}, the nominal error dynamics is written as
        \begin{align}
            &\dot{\boldsymbol{\xi}}_{ip} = \boldsymbol{v}_l - \boldsymbol{v}_i = -K_{i1}\boldsymbol{\xi}_{ip} + \boldsymbol{\xi}_{iv},\label{e1}\\ 
            &\dot{\boldsymbol{\xi}}_{iv} = K_{i1}\dot{\boldsymbol{\xi}}_{ip} -f_{v_i} - h_{v_i}\boldsymbol{u}_i = K_{i1}\boldsymbol{\xi}_{ip} - K_{i2}\boldsymbol{\xi}_{iv}.\label{e2}
        \end{align}
        By differentiating $V$ and substituting \eqref{e1},\eqref{e2}, we get
        $\dot{V} = \boldsymbol{\xi}_{ip}^{\top}\dot{\boldsymbol{\xi}}_{ip} + \boldsymbol{\xi}_{iv}^{\top}\dot{\boldsymbol{\xi}}_{iv}  = \boldsymbol{\xi}_{ip}^{\top}\boldsymbol{\xi}_{iv} - \boldsymbol{\xi}_{ip}^{\top}K_{i1}\boldsymbol{\xi}_{ip} + \boldsymbol{\xi}_{iv}^{\top}K_{i1}\boldsymbol{\xi}_{ip} - \boldsymbol{\xi}_{iv}^{\top}K_{i2}\boldsymbol{\xi}_{iv} \leq  \frac{(1 + \kappa_{i1})}{2}(\|\boldsymbol{\xi}_{ip}\|^{2} + \|\boldsymbol{\xi}_{iv}\|^{2}) - \kappa_{i1}\|\boldsymbol{\xi}_{ip}\|^{2} - \kappa_{i2}\|\boldsymbol{\xi}_{iv}\|^{2} \leq \frac{(1 - \kappa_{i1})}{2}\|\boldsymbol{\xi}_{ip}\|^{2} + \frac{(1 + \kappa_{i1}- 2\kappa_{i2})}{2}\|\boldsymbol{\xi}_{iv}\|^{2}$,
        where the upper bound was obtained using Young's inequality and the control gain $\kappa_{i1}>0$.
         Since $\kappa_{i1}>1,\kappa_{i2}>\frac{1+\kappa_{i1}}{2}$, we have $\dot{V} < 0$ and the equilibrium point $(\boldsymbol{\xi}_{ip},\boldsymbol{\xi}_{iv}) = (0,0)$ is asymptotically stable in $\mathbb{R}^{2}$. Based on the above analysis, it shows that the desired position $\boldsymbol{p}_{i}^{d}$ is tracked and, in turn, $\mathcal{F}_i(\boldsymbol{g}(t))$ is achieved as defined in \eqref{fsub}.\qed
\end{pf}
\begin{remark}
    With Theorem \ref{theo2}, when leaders are in fixed positions $\boldsymbol{p}_{l}^{d}$, stationary formation can be achieved with $\boldsymbol{v}_l = \boldsymbol{0}$ in the constraint \eqref{movet} and the speed tracking controller $\boldsymbol{u}_{iv}$. 
\end{remark}
    Consequently, no separate nominal formation controller is needed.} Using the state feedback controller $\boldsymbol{u}_{iv}$, we rescale the agent's speed $\|\boldsymbol{v}_i\|$ ({not the position errors}) in order to chase the moving bearing-guided target $\boldsymbol{p}_{iT}$, while the desired velocity orientation $\hat{\boldsymbol{v}}_i$ for formation is achieved by the reach CBF constraint. {This separation in the control design for velocity's magnitude and direction decouples the competing objectives. We next conclude for the global formation task $\mathcal{F}(\boldsymbol{g}(t))$ using Theorem \ref{theo1}, \ref{theo2}.  
\begin{theorem}\label{theo4}
Consider the heterogeneous multi-agent system, connected as $\mathcal{G}$ with a leader-follower framework, as defined in Section \ref{sec: prelim and prob defn}, satisfying Assumptions \ref{A1}-\ref{A4}. If each follower agent $i \in \mathcal{V}_F$ applies the control law $\boldsymbol{u}_i(t)$ obtained from the CBF-QP formulation in Theorem~\ref{theo2}, then the multi-agent system achieves the desired formation $\mathcal{F}(\boldsymbol{g}(t))$, thereby solving Problem~\ref{mainprob}.
\end{theorem}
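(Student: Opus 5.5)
The proof is a composition of the two preceding results: Theorem~\ref{theo2} at the level of each follower, then Theorem~\ref{theo1} at the level of the network. First, fix an arbitrary follower $i \in \mathcal{V}_F$. Assumption~\ref{A1} holds, so Lemma~\ref{lem2} gives two reference nodes $j_1, j_2 \in \mathcal{N}_i$ at unique positions with non-parallel desired bearings. Hence the matrix $A_i$ in \eqref{formation_target} is invertible, since its rows encode distinct slopes $m_{ij_1}^d \neq m_{ij_2}^d$, and the bearing-guided target $\boldsymbol{p}_{iT}$ is well defined. Assumption~\ref{A2} guarantees that $\boldsymbol{p}_{ij_1}, \boldsymbol{p}_{ij_2}$ are measurable, so $\boldsymbol{p}_{iT}$, $K_f$ and $K_h$ in \eqref{movet} can be computed from local information alone. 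This makes the QP \eqref{QPws} implementable in a distributed way.

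Second, I would invoke Theorem~\ref{theo2} for each such $i$. Assumption~\ref{A4} holds, and the discussion preceding Theorem~\ref{theo2} shows that $K_h \neq 0$ away from alignment, so the QP constraint is feasible. The reach CBF then gives alignment of $\hat{\boldsymbol{v}}_i$ with $\hat{\boldsymbol{p}}_{iT}$ within tolerance $\gamma_i$ by time $\tau_i$, as in Theorem~\ref{thm1}. The Lyapunov argument with $\boldsymbol{u}_{iv}$ gives $(\boldsymbol{\xi}_{ip}, \boldsymbol{\xi}_{iv}) \to 0$. Together these yield $\boldsymbol{p}_{iT} \to 0$ and therefore the subtask $\mathcal{F}_i(\boldsymbol{g}(t))$ in \eqref{fsub}.

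One point needs care here. The target $\boldsymbol{p}_{iT} = \boldsymbol{p}_i^d - \boldsymbol{p}_i$ is built from neighbors that may themselves be followers, and those neighbors are still converging. I would argue that $\boldsymbol{p}_i^d$ is the desired position determined by the leaders' constant-velocity trajectories and the fixed $\boldsymbol{g}^d$, with $\dot{\boldsymbol{p}}_i^d = \boldsymbol{v}_l$, exactly as assumed in the derivation of \eqref{movet} and \eqref{e1}. On that reading the per-agent error dynamics are decoupled, and the result of Theorem~\ref{theo2} applies to each follower independently.

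Third, since every follower achieves $\mathcal{F}_i(\boldsymbol{g}(t))$ and Assumptions~\ref{A1} and \ref{A2} hold, Theorem~\ref{theo1} gives $\boldsymbol{g}_{ij}(t) \to \boldsymbol{g}_{ij}^d$ for all $(i,j) \in \mathcal{E}$. The leaders follow their prescribed trajectories and are distinct, with $n_l \geq 2$ by Lemma~\ref{lem1}. Lemma~\ref{lem1} then converts bearing convergence into $\boldsymbol{p}_i(t) \to \boldsymbol{p}_i^d(t)$ for all followers. This is the unique rigid formation of Definition~\ref{mform}, which solves Problem~\ref{mainprob}.

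The main obstacle is the coupling issue in the second step. When a reference node is itself a follower, the measured $\boldsymbol{p}_{iT}$ depends on that neighbor's transient error, so the decoupled Lyapunov analysis only strictly holds once upstream agents have converged. I would first try a cascade argument. Order the followers by their distance from the leaders in the reference-node dependency structure, and note that each agent's error system is exponentially stable: $\dot V \le -cV$ with linear gains. A vanishing input from upstream then produces a vanishing error downstream by an input-to-state stability argument, and induction over the ordering finishes the proof. If the dependency structure has cycles, I would instead fall back on the paper's convention that $\boldsymbol{p}_{iT}$ is referenced to $\boldsymbol{p}_i^d$ directly.
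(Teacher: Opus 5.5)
Your proposal follows the paper's route: the paper obtains this theorem directly by applying Theorem~\ref{theo2} to each follower to get $\mathcal{F}_i(\boldsymbol{g}(t))$, and then Theorem~\ref{theo1} to lift these subtasks to $\mathcal{F}(\boldsymbol{g}(t))$, which is exactly your first three steps. Your cascade/ISS treatment of follower-to-follower coupling is an addition the paper does not make, since it simply takes $\dot{\boldsymbol{p}}_{iT} = \boldsymbol{v}_l - \boldsymbol{v}_i$ as given under Assumption~\ref{A1}; if you pursue it, note that the bound $\dot V \le -cV$ is for the nominal input $\boldsymbol{u}_{iv}$ and does not automatically carry over to the QP output $\boldsymbol{u}_i$ that is actually applied.
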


\begin{figure}[th]
     
      \begin{center}
      \vspace{-1em}
      \includegraphics[scale=0.45]{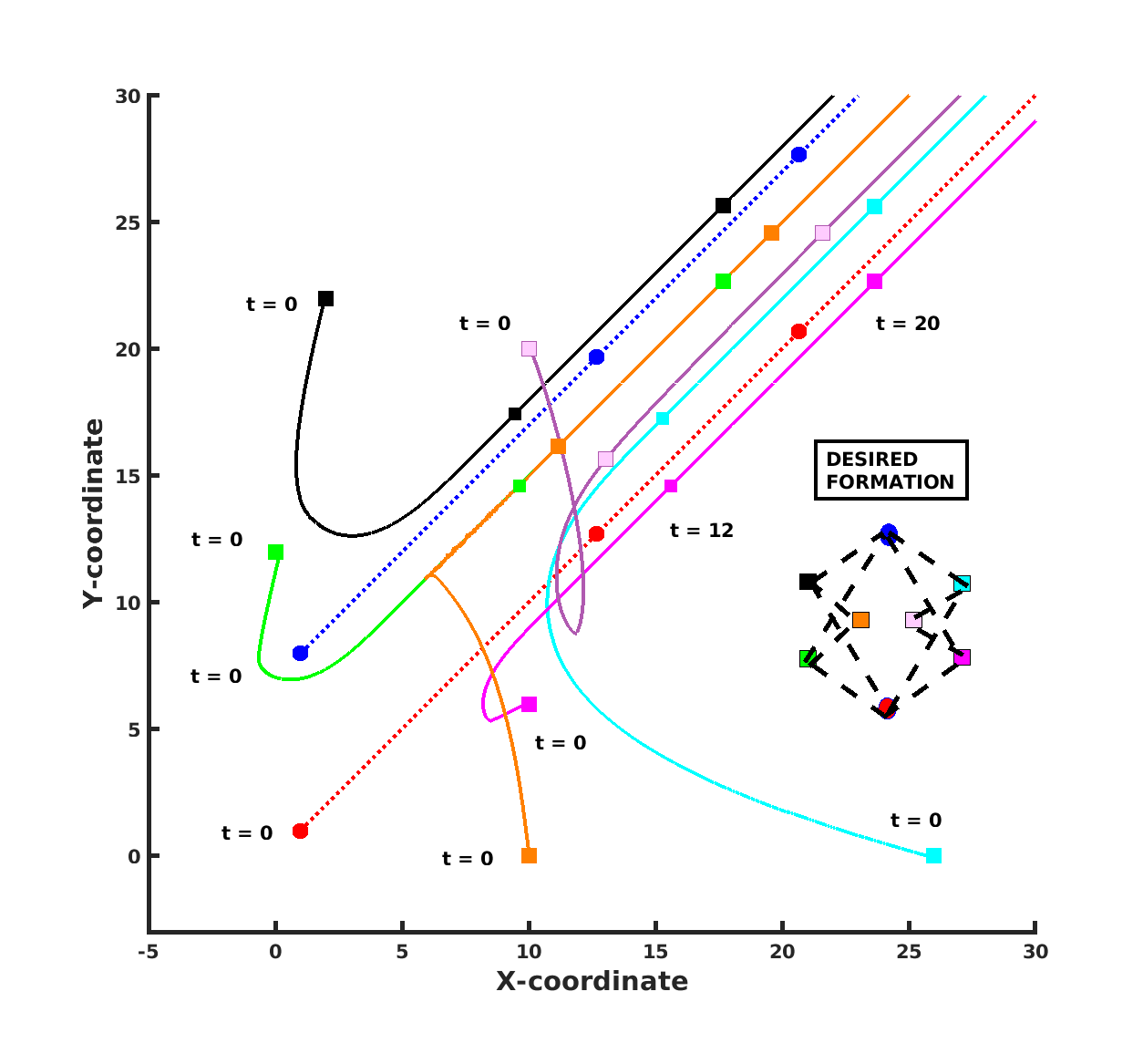}
      \vspace{-1em}
      \caption{Timestamps in s are shown for formation group positions. Formation tracking with two leader agents (red, blue) and six follower agents (green, black, cyan, magenta, orange, pink) is shown. For video visit \url{https://shorturl.at/0E2Rj}.}
      \label{result}
      \end{center}

\end{figure}
\section{Results \& Conclusion}
\label{sec: results}
In this section, we validate our proposed single CBF-like constraint through an integrated CBF-QP framework for formation tracking in a heterogeneous multi-agent system. The simulation includes two leader agents $\{1,2\}$ and six follower agents $\{3,4,5,6,7,8\}$ in a 2D space, connected as $\mathcal{G} = (\{1,2,3,4,5,6,7,8\},\{(3,1),(3,2),(4,1),(4,2),(5,1),\\(5,2),
(6,1),(6,2),(7,3),(7,4),(8,5),(8,6)\})$. The formation task $\mathcal{F}(\boldsymbol{g}(t))$ is defined by the bearing configuration $\boldsymbol{g}^{d} = \{(0.51,0.85),(0.83,-0.55),(0.83, 0.55),(0.51,-0.85),\\(-0.83,0.55),(-0.51,-0.85),(-0.51,0.85),(-0.83,-0.55),\\(-0.70, -0.70),(-0.89, 0.44),(0.89, 0.44),(0.70, -0.70)\}$, in the same order as the edge set $\mathcal{E}$, satisfying Assumption \ref{A1}. The follower agents 3, 6, and 8 are modeled as a double integrator with $f_{v_i}=\boldsymbol{0}, h_{v_i}=I_2$ and agents 4, 5, and 7 are modeled as differential drive (or Ackermann drive with small slip angle) with $f_{v_i} = \boldsymbol{0}, h_{v_i} = \begin{bmatrix}
    \cos(\delta_i) & -\beta_i\sin(\delta_i)\\
    \sin(\delta_i) & \beta_i\cos(\delta_i)
\end{bmatrix}$, where $\delta_i$ is the robot's orientation with $\dot{\delta_i} = [0,\ 1]\boldsymbol{u}_i$ and $\beta_i = \|\boldsymbol{v}_i\|$ {(for system dynamics derivation refer to \cite{sawarkar2026sliding}[II B]).} 
    We consider a case study with leader agents follow predetermined trajectories (shown by red and blue dotted lines in Fig. \ref{result}), while follower agents aim to achieve a desired formation $\mathcal{F}(\boldsymbol{g}(t))$, as defined in Definition \ref{mform}, using CBF-QP optimization framework in Theorem \ref{theo2}. The initial positions for leaders are $\boldsymbol{p}_1 = [1,\ 8]^{\top}$, $\boldsymbol{p}_2 = [1,\ 1]^{\top}$ and for followers are $\boldsymbol{p}_{3} = [0,\ 12]^{\top}$, $\boldsymbol{p}_{4} = [2,\ 22]^{\top}$, $\boldsymbol{p}_{5} = [26,\ 0]^{\top}$, $\boldsymbol{p}_{6} = [10,\ 6]^{\top}$, $\boldsymbol{p}_{7} = [10,\ 0]^{\top}$, $\boldsymbol{p}_{8} = [10,\ 20]^{\top}$. The initial orientations for differential drive followers are $\delta_4 = \frac{\pi}{2}, \delta_5 = -\frac{\pi}{4}, \delta_7 = 1$. The initial velocities are $\boldsymbol{v}_1 = [1,\ 1]^{\top}$, $\boldsymbol{v}_2 = [1,\ 1]^{\top}$, $\boldsymbol{v}_3 = [2,\ 3]^{\top}$, $\boldsymbol{v}_4 = [0,\ 5]^{\top}$, $\boldsymbol{v}_5 = [3.53,\ -3.53]^{\top}$, $\boldsymbol{v}_6 = [1,\ 1]^{\top}$, $\boldsymbol{v}_7 = [0.54,\ 0.84]^{\top}$, $\boldsymbol{v}_8 = [1,\ 1]^{\top}$, and angle tolerance $\gamma_i = 1^{\circ}, \forall i \in \mathcal{V}_{F}$. Using the reach CBF constraint \eqref{movet}, we have validated the formation with the final bearing error after 30 s as $\epsilon_3 = -4.6e^{-5}, \epsilon_4 = -4.6e^{-5}, \epsilon_5 = -4.5e^{-5}, \epsilon_6 = -4.6e^{-5}, \epsilon_7 = 1.87e^{-4}, \epsilon_8 = 1.83e^{-4}$, where $\epsilon_i = |\mathcal{N}_i| - \sum_{j=1}^{|\mathcal{N}_i|}\langle\boldsymbol{g}_{ij},\boldsymbol{g}_{ij}^{d}\rangle$. Formation tracking is achieved with solid lines green, black, cyan, magenta, orange, and pink as the path taken by follower agents 3-8, respectively (cf. Fig. \ref{result}).

In conclusion, this paper has proposed a control strategy for formation tracking, formulated through a single CBF-like constraint embedded in a CBF-QP-like optimization scheme for heterogeneous multi-agent systems. Future work aims to extend our approach for formation tracking with aerial vehicles in 3D space, navigating cluttered environments with accelerating obstacles.

\bibliographystyle{unsrtnat}
\bibliography{references}  






\end{document}